\documentclass[11pt]{amsart}

\usepackage[T1]{fontenc}
\usepackage[utf8]{inputenc}
\usepackage{lmodern}
\usepackage{amsmath,amssymb,amsthm,mathtools}
\usepackage{stmaryrd}
\usepackage{microtype}
\usepackage[colorlinks=true,linkcolor=blue,citecolor=blue,urlcolor=blue]{hyperref}
\usepackage[nameinlink,capitalise,noabbrev]{cleveref}
\usepackage[a4paper,margin=1in]{geometry}
\usepackage{float}
\usepackage{tikz}
\usepackage{tikz-cd}
\usetikzlibrary{arrows.meta,positioning,calc}

\newtheorem{theorem}{Theorem}[section]
\newtheorem{proposition}[theorem]{Proposition}

\theoremstyle{definition}
\newtheorem{definition}[theorem]{Definition}
\newtheorem{example}[theorem]{Example}

\theoremstyle{remark}
\newtheorem{remark}[theorem]{Remark}

\newcommand{\Set}{\mathbf{Set}}

\newcommand{\TBox}{\mathsf{T}}
\newcommand{\ABox}{\mathsf{A}}
\newcommand{\PBox}{\mathsf{P}}
\newcommand{\OBox}{\mathsf{O}}
\newcommand{\KB}{\mathsf{KB}}
\newcommand{\Ctx}{\mathcal{U}}
\newcommand{\Sem}[1]{\llbracket #1 \rrbracket}
\newcommand{\State}{\Sigma}

\title[TAPO-Description Logic for Information Behavior]{TAPO-Description Logic for Information Behavior: Refined OBoxes, Inference, and Categorical Semantics}
\author{Takao Inou\'{e}}
\address{Faculty of Informatics, Yamato University, Osaka, Japan}
\email{inoue.takao@yamato-u.ac.jp}
\thanks{Personal Email: takaoapple@gmail.com. I prefer my personal email address for correspondence.}
\date{April 22, 2026}

\keywords{description logic, categorical semantics, TAPO-description logic, TBox, ABox, PBox, OBox, sheaf semantics, topos theory, knowledge representation}
\subjclass[2020]{18F20, 03B70, 68T30, 18B25}

\begin{document}

\begin{abstract}
This paper develops a refined version of TAPO-description logic for the analysis of information behavior. The framework is treated not as a single homogeneous object logic, but as a layered formalism consisting of a static descriptive layer (TBox/ABox), a procedural layer (PBox), and an oracle-sensitive layer (OBox). To make this architecture mathematically explicit, we introduce a metalevel guard-judgment layer governing procedural branching and iteration. On this basis we formulate a core inference system for TAPO-description logic, covering static TBox/ABox reasoning, guarded procedural transition in the PBox, and validated external import in the OBox. We then give a categorical semantics for the resulting framework and indicate its sheaf-theoretic refinement. The theory is illustrated by examples of information-seeking behavior, including simple search behavior and review-sensitive ordering behavior in a curry restaurant. The aim is to treat not only static knowledge representation but also hesitation, external consultation, and action-guiding update within a unified logical setting.\end{abstract}

\maketitle

\tableofcontents

\section{Introduction}
Description logic has long provided a mathematically precise language for structured knowledge representation. Its classical architecture is centered around a terminological component and an assertional component, usually called the TBox and the ABox. This framework has proved powerful for ontology engineering and semantic technologies, but many actual information environments involve more than static concepts and assertions. They also involve procedures, controlled update, and access to information obtained from outside the current knowledge state.

The author previously introduced TAPO-description logic as a four-layered extension of the standard picture, consisting of a TBox, an ABox, a PBox, and an OBox \cite{InoueTAPO}. In that earlier paper, the emphasis was on the conceptual motivation for enriching description logic by procedural and oracle-based components. The purpose of the present paper is to push that program in a more explicitly mathematical direction. In particular, we aim to isolate a clean inferential skeleton that is robust enough to support later categorical, sheaf-theoretic, and topos-theoretic refinement.

Recent work has shown that ordinary description logics, especially systems around $\mathcal{ALC}$ and general TBoxes, admit meaningful category-theoretic semantics \cite{LeDuc2021,BrieulleLeDucVaillant2022,LeDucBrieulle2025}. There are also broader categorical approaches to knowledge representation that lie close to description logic, for example the bicategorical relational framework of Patterson \cite{Patterson2017}. On a different but related side, recent NLP-oriented work on extracting formal models from natural-language specifications has made explicit the importance of conditional and action-oriented intermediate structure \cite{GhoshEtAl2022SpecNFS}. These developments suggest that the passage from set-theoretic semantics to structural semantics is natural and fruitful. Nevertheless, most direct semantics for description logic still concentrate on the static TBox/ABox core, while parsing-oriented approaches do not by themselves provide an intrinsic treatment of procedural knowledge. The present paper is intended as a next step: not a survey of those directions, but a categorical and proof-theoretic continuation of the TAPO program.

From the viewpoint of information behavior, an additional nearby motivation comes from the literature on browsing. Chang and Rice treat browsing not as a weak substitute for directed search, but as a multidimensional and context-sensitive information activity shaped by motivation, interface structure, cognitive state, and available resources \cite{ChangRice1993}. Related work by Chang emphasizes that browsing interfaces and patterns of information organization can materially influence subsequent user action \cite{ChangAmazon2001}, while the broader study of access and browsing by Rice, McCreadie, and Chang argues that observable query--answer episodes are often only one visible portion of a larger chain of informational activity \cite{RiceMcCreadieChang2001}. This perspective is strongly compatible with the passage from TBox/ABox-structured to TAPO-structured description logic: once browsing, hesitation, reformulation, and externally mediated confirmation are taken seriously, procedural and oracle-sensitive layers cease to be optional embellishments and become mathematically natural components of the theory.

The central thesis of the present paper is that these additional layers should be treated as genuine mathematical structure. The PBox should not be reduced to an informal programming metaphor; it should be represented by programs, guarded evaluations, and composable state transformations over localized knowledge states. Likewise, the OBox should not remain a vague placeholder for ``external input.'' It should be strengthened into a controlled interface equipped with admissible queries, contextual responses, compatibility with restriction, and explicit criteria for trusted and certified import. This strengthening of the OBox, together with the introduction of an explicit inferential layer for procedural control, is one of the main ways in which the present paper goes beyond the original formulation in \cite{InoueTAPO}.

A point that is essential for the present paper is that TAPO-DL should not be read as a single homogeneous object language. Rather, it is a layered formalism. The TBox and ABox form the static descriptive layer; the PBox provides a procedural layer with constructs such as \textbf{if}, \textbf{while}, and sequencing; and the OBox provides an oracle-sensitive layer for controlled external import. The judgments used in connection with these procedural and oracle-sensitive constructs do not themselves belong to the underlying description-logic language. They are metalevel relations specifying how guards are evaluated, how procedures are executed, and how externally obtained information is admitted into the current state. In concrete applications to information behavior, such judgments may be realized by a system, by a user, or by an oracle-mediated interaction process. This separation of layers is one of the essential features of TAPO-DL as developed here.

To make this separation mathematically explicit, the paper introduces a guard-judgment layer for procedural branching and iteration. Instead of leaving \textbf{if} and \textbf{while} at the level of informal pseudocode, we give them proof-theoretic transition rules mediated by guard evaluation. This does not amount to a full internalization of procedural behavior into the object language of description logic; rather, it records at the metalevel how the procedural layer interacts with the static DL layer. In this respect the framework is intentionally hybrid: descriptive formulas remain in the DL layer, while procedural control is governed by judgments that are computationally meaningful and theoretically analyzable.

The paper is organized as follows. In \cref{sec:prelim} we recall the refined TAPO-DL architecture and fix the localized notation used throughout. In \cref{sec:obox} we sharpen the OBox by introducing admissible queries and responses, trust levels, certification data, and validation policies. In \cref{sec:proofsys} we present a core inference system for the refined framework, including the guard-judgment layer for the PBox. The categorical semantics is then developed in \cref{sec:tapo-objects,sec:ta,sec:pbox,sec:sheaf}. We record a schematic soundness statement together with the associated completeness problem in \cref{sec:soundcomplete}, and only after that turn to information-behavior examples in \cref{sec:examples}, ranging from simple search behavior to review-sensitive ordering behavior in a curry restaurant. Finally, \cref{sec:future} records further directions.

\section{Refined TAPO-description logic}\label{sec:prelim}
\subsection{Signature, concepts, and contexts}
Let the TAPO-signature be
\[
\Sigma_{\mathrm{sig}}=(N_C,N_R,N_I,\Ctx),
\]
where $N_C$ is a set of concept names, $N_R$ a set of role names, $N_I$ a set of individual names, and $\Ctx$ is a collection of contextual domains. We regard $\Ctx$ as a poset under refinement, writing $V\leq U$ when $V$ is a refinement of $U$.

Concept expressions are generated by the usual $\mathcal{ALC}$ grammar
\[
C ::= \top \mid \bot \mid A \mid C\sqcap D \mid C\sqcup D \mid \neg C \mid \exists r.C \mid \forall r.C,
\]
with $A\in N_C$ and $r\in N_R$.

\subsection{The TAPO architecture}
A TAPO-knowledge state over a context $U\in\Ctx$ consists of four layers:
\[
\KB(U)=(\TBox,\ABox_U,\PBox_U,\OBox_U).
\]
Here:
\begin{itemize}
    \item $\TBox$ records terminological data, such as concept inclusions and structural axioms;
    \item $\ABox_U$ records contextual assertions valid over $U$;
    \item $\PBox_U$ records admissible procedures over the localized state $(\TBox,\ABox_U)$;
    \item $\OBox_U$ records admissible modes of interaction with external information sources over $U$.
\end{itemize}
The original logical motivation for this four-layer structure is given in \cite{InoueTAPO}.

\subsection{Localization and restriction}
Restriction along $V\leq U$ is written
\[
\rho_{UV}:\KB(U)\to \KB(V).
\]
At the skeletal level we require functoriality of restriction and compatibility with the four-layer TAPO decomposition.

\begin{definition}[Presheaf of TAPO-states]
A \emph{presheaf of TAPO-states} on $\Ctx$ is a contravariant functor
\[
\KB:\Ctx^{\mathrm{op}}\to \Set
\]
whose value $\KB(U)$ is the set of TAPO-knowledge states over $U$.
\end{definition}

\begin{remark}
The point of using a presheaf already at this stage is that TBox/ABox/P/O-data need not be globally settled. Context dependence, partiality of information, and later gluing are built in from the start.
\end{remark}

\section{The OBox as oracle interface and controlled external import}\label{sec:obox}
In the present reorganization, the oracle layer is refined before the proof theory and before the categorical semantics are unfolded in detail. This reflects the view that the strengthened OBox is one of the main formal innovations of the paper.

The oracle layer is the place where the present paper most decisively strengthens the earlier logical skeleton from \cite{InoueTAPO}. In the original paper, the OBox was presented at a minimal level by an externally justified transition relation
\[
\Sem{\OBox}\subseteq (\TBox\times \ABox)\times (\TBox\times \ABox).
\]
Here we retain that intuition but replace it with a more structured semantics.

\subsection{The strengthened OBox}
\begin{definition}[Strengthened oracle frame]
Let $X_U=(\State_U,\PBox_U,\OBox_U)$ with $\State_U=(\TBox,\ABox_U)$. A \emph{strengthened oracle frame} over $U$ consists of the following data:
\begin{enumerate}
    \item a set $Q_U$ of admissible queries;
    \item a set $R_U$ of admissible responses;
    \item a partial answer map
    \[
    \operatorname{ans}_U:Q_U\dashrightarrow R_U;
    \]
    \item an import map
    \[
    \operatorname{imp}_U:R_U\dashrightarrow \mathcal P(\mathrm{Asrt}(U));
    \]
    \item a preordered set $(L_U,\preceq)$ of trust levels;
    \item a trust assignment
    \[
    \operatorname{tr}_U:R_U\to L_U;
    \]
    \item a certificate space $C_U$ and certificate assignment
    \[
    \operatorname{cert}_U:R_U\to \mathcal P(C_U);
    \]
    \item a chosen trust threshold $\ell_U\in L_U$;
    \item a validation policy
    \[
    \operatorname{val}_U(r,S)\in\{\mathsf{accept},\mathsf{reject},\mathsf{defer}\},
    \]
    defined for $r\in R_U$ and finite certificate sets $S\subseteq C_U$.
\end{enumerate}
\end{definition}

\begin{remark}
Compared with the earlier TAPO-DL paper \cite{InoueTAPO}, the present formulation extends the OBox in four explicit directions: admissible queries and responses are separated; raw oracle answers are distinguished from imported assertions; trust and certification data are recorded; and a validation policy mediates which externally obtained information is actually integrated into the ABox.
\end{remark}

\subsection{Validated import}
\begin{definition}[Validated response]
A response $r\in R_U$ is \emph{validated} if there exists a finite set $S\subseteq \operatorname{cert}_U(r)$ such that
\[
\operatorname{tr}_U(r)\succeq \ell_U
\quad\text{and}\quad
\operatorname{val}_U(r,S)=\mathsf{accept}.
\]
\end{definition}

\begin{definition}[Validated import map]
The \emph{validated import map}
\[
\operatorname{vimp}_U:R_U\dashrightarrow \mathcal P(\ABox_U)
\]
is defined exactly on validated responses and agrees there with the ordinary import map $\operatorname{imp}_U$.
\end{definition}

\begin{definition}[Sound strengthened OBox]
A strengthened oracle frame is \emph{sound} if every validated response imports only assertions compatible with the TBox $\TBox$.
\end{definition}

\begin{proposition}[Certified oracle transition]
A sound strengthened oracle frame determines a partial endomap
\[
\omega_U^{\mathrm{val}}:\State_U\dashrightarrow \State_U
\]
by adjoining the validated imported assertions to $\ABox_U$ whenever defined.
\end{proposition}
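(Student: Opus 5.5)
The plan is to construct $\omega_U^{\mathrm{val}}$ directly and then check that it is well defined and lands in $\State_U$. The statement speaks of a map on states, while the oracle data are indexed by responses. So the first step is to fix how a state selects a response. I will use the data the frame already supplies. A state $\sigma=(\TBox,\ABox)\in\State_U$ is assigned a query through the admissible-query structure of $\OBox_U$; if the frame is read as carrying one distinguished query $q_\sigma\in Q_U$ per state, take that query, and otherwise regard $\omega_U^{\mathrm{val}}$ as indexed by $q\in Q_U$.

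The map is then defined as the composite of partial maps
\[
\sigma\;\longmapsto\; r=\operatorname{ans}_U(q_\sigma)\;\longmapsto\;\operatorname{vimp}_U(r)\;\longmapsto\;(\TBox,\ \ABox\cup\operatorname{vimp}_U(r)).
\]
Its domain consists of those $\sigma$ for which $\operatorname{ans}_U(q_\sigma)$ is defined and the resulting response $r$ is validated. By the definition of the validated import map, $\operatorname{vimp}_U$ is defined exactly on validated responses and agrees there with $\operatorname{imp}_U$. The union therefore makes sense, since $\operatorname{vimp}_U(r)\subseteq\ABox_U$-type assertions over $U$.

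Well-definedness comes down to showing that no choices were made. Validation is an existential condition: some finite $S\subseteq\operatorname{cert}_U(r)$ with $\operatorname{tr}_U(r)\succeq\ell_U$ and $\operatorname{val}_U(r,S)=\mathsf{accept}$. The witness $S$ is never used in forming the output. The imported set is $\operatorname{imp}_U(r)$, which depends on $r$ alone, so different certificate witnesses give the same value. Composites of partial maps are partial maps, so $\omega_U^{\mathrm{val}}$ is a partial function $\State_U\dashrightarrow\State_U$. The $\TBox$ component is left untouched, which matches the convention that $\TBox$ is fixed across contexts.

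The main obstacle is showing that the output really is an element of $\State_U$, meaning an admissible localized state and not just some pair. This is where soundness is used. By the definition of a sound strengthened OBox, every validated response imports only assertions compatible with $\TBox$. I will read $\State_U$ as the set of pairs whose ABox is an assertion set over $U$ that is (jointly) compatible with $\TBox$. One concern is that each imported assertion may be compatible with $\TBox$ individually while the union $\ABox\cup\operatorname{vimp}_U(r)$ is not. If the paper's notion of compatibility is only pointwise, I will restrict the domain of $\omega_U^{\mathrm{val}}$ further to the states where the union stays consistent with $\TBox$. This keeps the map partial, which the statement allows. Either reading yields the required partial endomap.
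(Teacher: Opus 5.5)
Your proposal is correct and follows the paper's argument: the map adjoins $\operatorname{vimp}_U(r)$ for a validated response $r$, and soundness of the frame is what is invoked to see that $(\TBox,\ABox_U\cup\operatorname{vimp}_U(r))$ is again a localized knowledge state. You also handle three points that the paper's three-line proof leaves implicit: how a state determines a response (the paper's $\operatorname{ans}_U$ does not depend on the state, so indexing by $q\in Q_U$ as in rule $(\mathrm{Oracle\text{-}Accept})$ is the natural reading), the independence of the output from the certificate witness $S$, and the possibility that $\ABox_U\cup\operatorname{vimp}_U(r)$ is not jointly compatible with $\TBox$, which the paper passes over.
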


\begin{proof}
If a response is validated, then by definition it passes the trust and certification threshold; if the frame is sound, its imported assertions are compatible with $\TBox$. Hence the updated pair $(\TBox,\ABox_U')$ is again a localized knowledge state.
\end{proof}

\begin{remark}
In this way the original transition-relation view of the OBox is recovered as a shadow of a more structured oracle mechanism. The present categorical skeleton therefore keeps the spirit of \cite{InoueTAPO} while making the oracle layer mathematically more explicit.
\end{remark}

\subsection{Restriction compatibility}
\begin{definition}[Restriction-compatible strengthened OBox]
Let $V\leq U$. A strengthened oracle frame over $U$ is \emph{restriction-compatible} if there are induced maps
\[
Q_U\to Q_V,
\qquad
R_U\to R_V,
\qquad
L_U\to L_V,
\qquad
C_U\to C_V,
\]
such that answer maps, trust assignments, certificate assignments, validation policies, and validated import commute with restriction of localized knowledge states.
\end{definition}

\begin{proposition}[Restriction of validated oracle imports]
Suppose the strengthened OBox is restriction-compatible. Then validated oracle imports commute with contextual restriction wherever both sides are defined.
\end{proposition}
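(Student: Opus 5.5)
The plan is to reduce the statement to the definition of restriction-compatibility by unfolding the validated import map into its two components: the validation predicate and the ordinary import map. Write $q_{UV}:Q_U\to Q_V$, $r_{UV}:R_U\to R_V$, $l_{UV}:L_U\to L_V$, $c_{UV}:C_U\to C_V$ for the induced maps. Let $\rho^{\ABox}_{UV}$ be the ABox component of $\rho_{UV}$, extended to subsets of assertions by direct image. The claim to establish is then
\[
\rho^{\ABox}_{UV}\bigl(\operatorname{vimp}_U(r)\bigr)=\operatorname{vimp}_V\bigl(r_{UV}(r)\bigr)
\]
for every $r\in R_U$ at which both sides are defined. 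Combining this with the certified oracle transition proposition, the same identity for the endomaps, $\rho_{UV}\circ\omega^{\mathrm{val}}_U=\omega^{\mathrm{val}}_V\circ\rho_{UV}$, follows. The reason is that adjoining assertions to the ABox and then restricting agrees with restricting and then adjoining, provided restriction acts on ABoxes by direct image. I would state that as a standing assumption of the skeletal functoriality in \cref{sec:prelim}.

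The key steps run in this order.

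First, fix $r\in R_U$ with both sides defined. By definition of $\operatorname{vimp}$, this means $r$ is validated over $U$ and $r_{UV}(r)$ is validated over $V$.

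Second, on validated responses we have $\operatorname{vimp}_U(r)=\operatorname{imp}_U(r)$ and $\operatorname{vimp}_V(r_{UV}(r))=\operatorname{imp}_V(r_{UV}(r))$. The problem therefore reduces to the identity $\rho^{\ABox}_{UV}\circ\operatorname{imp}_U=\operatorname{imp}_V\circ r_{UV}$ on this domain. That identity is exactly the clause of restriction-compatibility requiring that validated import commute with restriction of localized knowledge states.

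Third, read the commutation of answer maps, $r_{UV}\circ\operatorname{ans}_U=\operatorname{ans}_V\circ q_{UV}$. From it, the same conclusion holds when we start from a query $q\in Q_U$ rather than from a response: the oracle transition triggered by $q$ over $U$ restricts to the one triggered by $q_{UV}(q)$ over $V$.

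Fourth, remark on how much of the domain is covered. If in addition $l_{UV}$ is monotone with $l_{UV}(\ell_U)\preceq\ell_V$ fails to be needed, then "wherever both sides are defined" is the right hedge: validation is not claimed to be preserved under restriction. If one wants more, assume $l_{UV}$ is monotone and $l_{UV}(\ell_U)\succeq\ell_V$, and that $c_{UV}$ maps $\operatorname{cert}_U(r)$ into $\operatorname{cert}_V(r_{UV}(r))$. Then a witnessing finite $S$ is sent to $c_{UV}(S)$, and commutation of $\operatorname{val}$ gives $\operatorname{val}_V(r_{UV}(r),c_{UV}(S))=\mathsf{accept}$. Under these assumptions the right-hand side is defined whenever the left-hand side is.

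The main obstacle is the informal phrasing of restriction-compatibility. It is not fixed whether "validated import commutes with restriction" is an axiom, in which case the proof is nearly tautological, or whether it must be derived from the commutation of $\operatorname{tr}$, $\operatorname{cert}$, $\operatorname{val}$ and $\operatorname{imp}$. I would adopt the latter reading to make the proposition contentful. Commutation of $\operatorname{imp}$ with restriction of assertions then gives the equality of imported sets. Commutation of the trust, certificate and validation data gives the optional preservation of validity from the fourth step. Care is needed only to avoid asserting that validity transfers in the opposite direction, from $V$ back to $U$, since that generally fails.
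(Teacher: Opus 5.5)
Your proof is correct and is essentially the paper's. The paper's proof is the single observation that restriction-compatibility already stipulates that validated import commutes with restriction, and your Step 2 does exactly this after unfolding $\operatorname{vimp}$ into $\operatorname{imp}$ on validated responses. Steps 3 and 4 and the direct-image assumption are optional extras. One caution: the ``contentful'' reading you prefer at the end is not available from the stated hypotheses. The definition lists commutation of answer, trust, certificate and validation data but not of $\operatorname{imp}$, so those data only control where $\operatorname{vimp}$ is defined, and it is the clause on validated import itself that fixes the imported assertions.
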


\begin{proof}
This is exactly the content of restriction-compatibility: a response, its trust data, its certification data, and its validation status are all transported coherently along restriction, and validated import is required to commute with the restriction maps of the surrounding TAPO-presheaf.
\end{proof}

\subsection{Compositionality of oracle interfaces}
\begin{definition}[Composable strengthened oracle frames]
Two strengthened oracle frames on $\State_U$ are \emph{composable} if the validated output state of the first lies in the domain of the second whenever both are defined.
\end{definition}

\begin{proposition}[Closure under certified composition]
Let $\omega_1$ and $\omega_2$ be composable strengthened oracle frames. If the validation policy for $\omega_2$ is allowed to inspect certificates produced or preserved by $\omega_1$, then the composite oracle transition again admits a natural certified semantics.
\end{proposition}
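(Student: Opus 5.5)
The plan is to build the composite certified transition explicitly from the two partial endomaps supplied by the Certified oracle transition proposition and then check that it again has the form of a sound strengthened oracle frame. Write $\omega_1^{\mathrm{val}},\omega_2^{\mathrm{val}}:\State_U\dashrightarrow\State_U$ for these endomaps. Composability says the output of $\omega_1^{\mathrm{val}}$ lies in the domain of $\omega_2^{\mathrm{val}}$ whenever both are defined. So the composite $\omega_2^{\mathrm{val}}\circ\omega_1^{\mathrm{val}}$ is a well-defined partial endomap; composition of partial maps is associative and has identities. This gives the underlying transition. What remains is to show it is \emph{certified}: we must exhibit oracle-frame data $(Q,R,\operatorname{ans},\operatorname{imp},L,\operatorname{tr},C,\operatorname{cert},\ell,\operatorname{val})$ whose validated import is this composite.

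The data will be built componentwise.
\begin{itemize}
\item Queries: take $Q=Q^{(1)}\times Q^{(2)}$ and interpret a query as a two-stage consultation.
\item Responses: take $R\subseteq R^{(1)}\times R^{(2)}$ to be the pairs arising when the second query is posed in the state produced by the first import. Let $\operatorname{ans}$ be the induced partial map, and let $\operatorname{imp}(r_1,r_2)=\operatorname{imp}^{(1)}(r_1)\cup\operatorname{imp}^{(2)}(r_2)$.
\item Trust: take $L=L^{(1)}\times L^{(2)}$ with the product preorder, $\operatorname{tr}=\operatorname{tr}^{(1)}\times\operatorname{tr}^{(2)}$, and threshold $\ell=(\ell^{(1)},\ell^{(2)})$. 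Then $\operatorname{tr}(r)\succeq\ell$ holds exactly when both components meet their thresholds.
\item Certificates: take $C=C^{(1)}\sqcup C^{(2)}$ and $\operatorname{cert}(r_1,r_2)=\operatorname{cert}^{(1)}(r_1)\sqcup\operatorname{cert}^{(2)}(r_2)$.
\item Validation: split a finite $S\subseteq C$ as $S_1\sqcup S_2$ and set $\operatorname{val}(r,S)=\mathsf{accept}$ iff $\operatorname{val}^{(1)}(r_1,S_1)=\mathsf{accept}$ and $\operatorname{val}^{(2)}(r_2,S_2\cup S_1)=\mathsf{accept}$. Otherwise return $\mathsf{reject}$ if either component rejects, and $\mathsf{defer}$ if not.
\end{itemize}
The hypothesis that $\omega_2$ may inspect certificates produced or preserved by $\omega_1$ is exactly what makes $\operatorname{val}^{(2)}(r_2,S_2\cup S_1)$ meaningful. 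To make this precise, I would read the hypothesis as a map $C^{(1)}\to C^{(2)}$, or as an enlargement of the domain of $\operatorname{val}^{(2)}$.

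With these choices, the key step is to show that a composite response is validated iff both components are validated in sequence, and that its validated import equals the assertions adjoined by $\omega_2^{\mathrm{val}}\circ\omega_1^{\mathrm{val}}$. Soundness then follows from the soundness of each frame in two stages. Adjoining $\operatorname{vimp}^{(1)}(r_1)$ yields a localized state compatible with $\TBox$. Because $r_2$ is posed in that updated state, $\operatorname{vimp}^{(2)}(r_2)$ is compatible with $\TBox$ relative to it, so the union is compatible. This is why responses are indexed by the intermediate state rather than taken as a plain product.

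I expect the main obstacle to be this last point: "compatible with $\TBox$" is not obviously closed under unions of separately compatible assertion sets. The sequential indexing of $R$ is what I would rely on to avoid it, and the proof must state explicitly that the second frame's soundness is evaluated at the updated ABox. A secondary subtlety is the ordering of the trust preorder. If a single-valued trust level is wanted, I would fall back on a meet in $L$, assuming $L$ has binary meets, and note that the product construction is the canonical choice otherwise. Naturality, meaning that the construction commutes with restriction when both frames are restriction-compatible, can be noted as a corollary of the Restriction of validated oracle imports proposition applied componentwise.
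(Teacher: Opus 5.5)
Your proposal is correct and follows the same route as the paper. The paper's proof likewise composes the two validated partial transitions, carries the certificates produced at the first stage into the second-stage validation policy, and lets soundness be inherited because each stage admits only $\TBox$-compatible imports; your product/disjoint-union composite frame is just an explicit version of that sketch, though the sequential indexing of responses is unnecessary, since $\operatorname{ans}_U$ and soundness are state-independent in the paper and ``imports only assertions compatible with $\TBox$'' reads as a per-assertion condition, which is preserved under unions.
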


\begin{proof}
One composes the underlying partial transitions and carries the certification data produced at the first stage to the second-stage validation policy. Soundness is inherited provided each stage accepts only T-compatible imports.
\end{proof}

\begin{remark}
This again goes beyond the original OBox formulation in \cite{InoueTAPO}. There the main point was controlled openness to external information. Here that openness is refined into a compositional interface with provenance-sensitive validation.
\end{remark}

\begin{remark}[How the present OBox extends the original TAPO-DL layer]
In the original TAPO-DL paper \cite{InoueTAPO}, the OBox was introduced mainly as a relation of externally justified state transition. The present paper keeps that basic idea, but extends it in a form better suited to information-behavior analysis. The extension has four visible aspects: the query posed by the agent is separated from the response returned by the external source; imported assertions are distinguished from raw responses; trust and certification are represented explicitly; and a validation policy determines which externally obtained content is actually integrated into the ABox. In this sense the strengthened OBox is not a different oracle layer, but a structured refinement of the original one.
\end{remark}

\subsection{Why the OBox is mathematically nontrivial}
The strengthened OBox is not merely a box labelled ``outside information.'' It carries at least five visible pieces of structure: admissible questions, admissible answers, an import mechanism, trust/certification data, and validation policies compatible with context. From a categorical viewpoint, the oracle layer behaves like a structured interface between an internal knowledge state and a family of external information channels.

\section{A core inference system for TAPO-description logic}\label{sec:proofsys}
The semantic skeleton developed above becomes substantially stronger once one records a corresponding proof-theoretic layer. We therefore introduce a small inference system whose purpose is to make explicit how static entailment, guard evaluation, procedural transition, and oracle-mediated import may be derived inside TAPO-description logic itself. A fuller completeness theorem is left for later refinement; the present calculus is intended as the core proof-theoretic layer compatible with the semantics developed here.

\subsection{Judgments}
We use five kinds of judgments.
\begin{enumerate}
    \item Terminological consequence:
    \[
    \TBox \vdash C\sqsubseteq D.
    \]
    \item Assertional consequence over a context $U$:
    \[
    \State_U \vdash a:C@U,
    \qquad
    \State_U \vdash (a,b):r@U.
    \]
    \item Static guard evaluation:
    \[
    (\State_U,\PBox_U)\Downarrow_g \gamma : \mathbf t,
    \qquad
    (\State_U,\PBox_U)\Downarrow_g \gamma : \mathbf f.
    \]
    \item Procedural transition:
    \[
    (\State_U,\PBox_U)\vdash P: \State_U \rightsquigarrow \State_U'.
    \]
    \item Oracle-mediated validated import:
    \[
    (\State_U,\OBox_U)\vdash q \xRightarrow{V} \State_U'.
    \]
\end{enumerate}
The first two are static descriptive judgments. The third is a static guard judgment attached to the procedural layer and records whether a branching condition is currently evaluated as true or false. The fourth is dynamic and records state transformation by a PBox program. The fifth records validated acquisition of external information under a fixed validation policy $V$.

\subsection{Static rules for the TBox and ABox}
The static core extends standard description-logic reasoning by context labels.

\begin{definition}[Static rules]
The core static rules include the following schemata:
\[
\frac{\TBox \vdash C\sqsubseteq D \qquad \State_U \vdash a:C@U}
     {\State_U \vdash a:D@U}
\; (\mathrm{T\text{-}Sub})
\]
\[
\frac{\State_U \vdash a:C@U \qquad \State_U \vdash a:D@U}
     {\State_U \vdash a:(C\sqcap D)@U}
\; (\sqcap\mathrm{I})
\]
\[
\frac{\State_U \vdash a:(C\sqcap D)@U}
     {\State_U \vdash a:C@U}
\; (\sqcap\mathrm{E}_1)
\qquad
\frac{\State_U \vdash a:(C\sqcap D)@U}
     {\State_U \vdash a:D@U}
\; (\sqcap\mathrm{E}_2)
\]
\[
\frac{\State_U \vdash (a,b):r@U \qquad \State_U \vdash b:C@U}
     {\State_U \vdash a:\exists r.C@U}
\; (\exists\mathrm{I})
\]
\[
\frac{a:C@U\in \ABox_U}{\State_U \vdash a:C@U}
\; (\mathrm{A\text{-}Ax})
\qquad
\frac{(a,b):r@U\in \ABox_U}{\State_U \vdash (a,b):r@U}
\; (\mathrm{R\text{-}Ax})
\]
\end{definition}

\begin{remark}
These rules are intentionally modest. Their role is to exhibit a static deductive layer compatible with the categorical semantics of \cref{sec:ta}, not to provide a maximal calculus for all description-logic constructors.
\end{remark}

\subsection{Guard judgments as a distinct layer}\label{sec:guards}
The distinction between static description and procedural control is essential in TAPO-DL. In particular, the branching constructs of the PBox are not ordinary formulas of description logic. They are governed by a separate guard judgment at the metalevel. Formally, this judgment belongs neither to the TBox/ABox object language nor to the oracle language. In concrete interpretations it may be realized by a system, by a user, or by an oracle-mediated interaction process.

\begin{definition}[Guard language and guard profile]
Fix a context $U$. Let $G_U$ be a designated set of \emph{basic guard atoms}. In the present paper these may be chosen from localized assertions such as $a:C@U$ or $(a,b):r@U$ whenever they are intended to serve as branching conditions. The guard language $\mathcal G_U$ is generated by
\[
\gamma ::= \alpha \mid \top \mid \bot \mid (\gamma\wedge\gamma) \mid (\gamma\vee\gamma) \mid \neg \gamma,
\qquad \alpha\in G_U.
\]
A \emph{guard profile} at $U$ is a total assignment
\[
\chi_U:G_U\to\{\mathbf t,\mathbf f\}.
\]
The corresponding guard judgment
\[
(\State_U,\PBox_U)\Downarrow_g \gamma : b
\qquad (b\in\{\mathbf t,\mathbf f\})
\]
is the least relation determined by $\chi_U$ together with the usual Boolean clauses:
\[
(\State_U,\PBox_U)\Downarrow_g \alpha : b \iff \chi_U(\alpha)=b,
\]
\[
(\State_U,\PBox_U)\Downarrow_g \top : \mathbf t,
\qquad
(\State_U,\PBox_U)\Downarrow_g \bot : \mathbf f,
\]
\[
(\State_U,\PBox_U)\Downarrow_g (\gamma_1\wedge\gamma_2):\mathbf t
\iff
(\State_U,\PBox_U)\Downarrow_g \gamma_1:\mathbf t
\text{ and }
(\State_U,\PBox_U)\Downarrow_g \gamma_2:\mathbf t,
\]
\[
(\State_U,\PBox_U)\Downarrow_g (\gamma_1\wedge\gamma_2):\mathbf f
\iff
(\State_U,\PBox_U)\Downarrow_g \gamma_1:\mathbf f
\text{ or }
(\State_U,\PBox_U)\Downarrow_g \gamma_2:\mathbf f,
\]
\[
(\State_U,\PBox_U)\Downarrow_g (\gamma_1\vee\gamma_2):\mathbf t
\iff
(\State_U,\PBox_U)\Downarrow_g \gamma_1:\mathbf t
\text{ or }
(\State_U,\PBox_U)\Downarrow_g \gamma_2:\mathbf t,
\]
\[
(\State_U,\PBox_U)\Downarrow_g (\gamma_1\vee\gamma_2):\mathbf f
\iff
(\State_U,\PBox_U)\Downarrow_g \gamma_1:\mathbf f
\text{ and }
(\State_U,\PBox_U)\Downarrow_g \gamma_2:\mathbf f,
\]
\[
(\State_U,\PBox_U)\Downarrow_g \neg\gamma:\mathbf t
\iff
(\State_U,\PBox_U)\Downarrow_g \gamma:\mathbf f,
\qquad
(\State_U,\PBox_U)\Downarrow_g \neg\gamma:\mathbf f
\iff
(\State_U,\PBox_U)\Downarrow_g \gamma:\mathbf t.
\]
\end{definition}

\begin{remark}
The use of guard profiles avoids raw non-derivability conditions such as $\nvdash$ inside the procedural rules. This is important both proof-theoretically and conceptually. Proof-theoretically, it keeps branching on the side of positive judgments. Conceptually, it makes explicit that the decision to take a branch belongs to a procedural control layer and need not be identified with ordinary description-logic entailment.
\end{remark}

\subsection{Transition rules for the PBox}
The procedural layer is naturally formulated by transition judgments rather than ordinary formula entailment. Its branching rules are defined relative to the guard judgment introduced above.

\begin{definition}[Core PBox transition rules]
For programs in the language $\mathbf P_0$, the following transition rules are admitted:
\[
\frac{}{(\State_U,\PBox_U)\vdash \mathbf{skip}:\State_U\rightsquigarrow \State_U}
\; (\mathrm{Skip})
\]
\[
\frac{}{(\State_U,\PBox_U)\vdash \mathbf{add}\;\beta:\State_U\rightsquigarrow (\TBox,\ABox_U\cup\{\beta\})}
\; (\mathrm{Add})
\]
\[
\frac{}{(\State_U,\PBox_U)\vdash \mathbf{del}\;\beta:\State_U\rightsquigarrow (\TBox,\ABox_U\setminus\{\beta\})}
\; (\mathrm{Del})
\]
\[
\frac{(\State_U,\PBox_U)\vdash P:\State_U\rightsquigarrow \State_U' \qquad
      (\State_U',\PBox_U)\vdash Q:\State_U'\rightsquigarrow \State_U''}
     {(\State_U,\PBox_U)\vdash P;Q:\State_U\rightsquigarrow \State_U''}
\; (\mathrm{Seq})
\]
\[
\frac{(\State_U,\PBox_U)\Downarrow_g \gamma : \mathbf t \qquad (\State_U,\PBox_U)\vdash P:\State_U\rightsquigarrow \State_U'}
     {(\State_U,\PBox_U)\vdash \mathbf{if}\;\gamma\;\mathbf{then}\;P\;\mathbf{else}\;Q:\State_U\rightsquigarrow \State_U'}
\; (\mathrm{If\text{-}T})
\]
\[
\frac{(\State_U,\PBox_U)\Downarrow_g \gamma : \mathbf f \qquad (\State_U,\PBox_U)\vdash Q:\State_U\rightsquigarrow \State_U'}
     {(\State_U,\PBox_U)\vdash \mathbf{if}\;\gamma\;\mathbf{then}\;P\;\mathbf{else}\;Q:\State_U\rightsquigarrow \State_U'}
\; (\mathrm{If\text{-}F})
\]
\[
\frac{(\State_U,\PBox_U)\Downarrow_g \gamma : \mathbf f}
     {(\State_U,\PBox_U)\vdash \mathbf{while}\;\gamma\;\mathbf{do}\;P:
      \State_U\rightsquigarrow \State_U}
\; (\mathrm{While\text{-}F})
\]
\[
\frac{\begin{array}{c}
(\State_U,\PBox_U)\Downarrow_g \gamma : \mathbf t \\
(\State_U,\PBox_U)\vdash P:\State_U\rightsquigarrow \State_U' \\
(\State_U',\PBox_U)\vdash \mathbf{while}\;\gamma\;\mathbf{do}\;P:\State_U'\rightsquigarrow \State_U''
\end{array}}
{(\State_U,\PBox_U)\vdash \mathbf{while}\;\gamma\;\mathbf{do}\;P:\State_U\rightsquigarrow \State_U''}
\; (\mathrm{While\text{-}T})
\]
\end{definition}

\begin{remark}
The PBox rules express information behavior as derivable state change. In particular, a hesitation condition can lead not directly to a final assertion, but to a procedural branch, to repeated reformulation, or to invocation of the oracle layer. The guard judgment keeps explicit that such branching belongs to a distinct procedural control layer rather than to the descriptive core alone.
\end{remark}

\subsection{Validated import rules for the OBox}
The strengthened OBox from \cref{sec:obox} permits a proof-theoretic reading of trust-sensitive external import.

\begin{definition}[Core OBox rules]
Let $V$ be a validation policy. The core oracle rules are:
\[
\frac{q\in Q_U \qquad \operatorname{ans}_U(q)=r}
     {(\State_U,\OBox_U)\vdash q \leadsto r}
\; (\mathrm{Query})
\]
\[
\frac{(\State_U,\OBox_U)\vdash q \leadsto r \qquad
      \operatorname{tr}_U(r)\succeq \ell_U \qquad
      \operatorname{val}_U(r,S)=\mathsf{accept}}
     {(\State_U,\OBox_U)\vdash q \xRightarrow{V} (\TBox,\ABox_U\cup \operatorname{vimp}_U(r))}
\; (\mathrm{Oracle\text{-}Accept})
\]
\[
\frac{(\State_U,\OBox_U)\vdash q \leadsto r \qquad
      \bigl(\operatorname{tr}_U(r)\not\succeq \ell_U \text{ or } \operatorname{val}_U(r,S)\in\{\mathsf{reject},\mathsf{defer}\}\bigr)}
     {(\State_U,\OBox_U)\vdash q \xRightarrow{V} \State_U}
\; (\mathrm{Oracle\text{-}Hold})
\]
for finite certificate sets $S\subseteq \operatorname{cert}_U(r)$.
\end{definition}

\begin{remark}
The rule $(\mathrm{Oracle\text{-}Hold})$ is important for information-behavior analysis. An external review, recommendation, or report may be consulted without yet being imported; this corresponds formally to a derivable transition that leaves the current state unchanged.
\end{remark}

\subsection{Procedural invocation of the oracle layer}
The TAPO setting becomes distinctive when PBox and OBox judgments are allowed to interact.

\begin{definition}[Consultation rule]
Assume that a procedural state records unresolved hesitation by an assertion of the form
\[
\State_U \vdash x:\mathsf{ReviewConsultationNeeded}@U.
\]
Then an oracle consultation may be triggered by the rule
\[
\frac{\State_U \vdash x:\mathsf{ReviewConsultationNeeded}@U \qquad
      (\State_U,\OBox_U)\vdash q_x \xRightarrow{V} \State_U'}
     {(\State_U,\PBox_U,\OBox_U)\vdash \mathbf{consult}(q_x):\State_U\rightsquigarrow \State_U'}
\; (\mathrm{Consult})
\]
provided $q_x$ is an admissible query associated with the hesitation state of $x$.
\end{definition}

\begin{remark}
Rule $(\mathrm{Consult})$ formalizes exactly the sort of situation illustrated by the curry-restaurant example in Example~\ref{ex:curry2}: uncertainty does not directly resolve into acceptance or rejection, but first triggers an external review query, whose validated result then re-enters the procedural flow.
\end{remark}

\section{TAPO-structured knowledge objects}\label{sec:tapo-objects}
We now turn to the categorical semantics of the refined framework. The first step is to isolate the localized TAPO-objects on which the semantic constructions operate.

In order to keep the categorical language close to the original TAPO-DL notation, we separate the static knowledge state
\[
\State_U=(\TBox,\ABox_U)
\]
from the procedural and oracle layers attached to it.

\begin{definition}[Localized knowledge state]
For a fixed context $U\in\Ctx$, a \emph{localized knowledge state} is a pair
\[
\State_U=(\TBox,\ABox_U),
\]
where $\TBox$ is a TBox and $\ABox_U$ is a contextual ABox over $U$.
\end{definition}

\begin{definition}[TAPO-object]
A \emph{TAPO-object over $U$} is a triple
\[
X_U=(\State_U,\PBox_U,\OBox_U)
\]
consisting of a localized knowledge state $\State_U=(\TBox,\ABox_U)$, a procedural layer $\PBox_U$ over $\State_U$, and an oracle layer $\OBox_U$ over $\State_U$.
\end{definition}

\begin{definition}[Morphisms of TAPO-objects]
A morphism
\[
f:X_U\to Y_U
\]
of TAPO-objects over the same context is a quadruple
\[
f=(f_T,f_A,f_P,f_O)
\]
with the following properties:
\begin{enumerate}
    \item $f_T$ preserves valid terminological inclusions in the TBox;
    \item $f_A$ carries contextual assertions $a:C@U$ and $(a,b):r@U$ in $\ABox_U$ to corresponding assertions in the target ABox;
    \item $f_P$ transports admissible PBox programs and preserves their evaluations whenever defined;
    \item $f_O$ transports admissible oracle interfaces and preserves validated oracle imports whenever defined.
\end{enumerate}
\end{definition}

\begin{definition}[Execution-stable morphism]
A morphism $f:X_U\to Y_U$ is \emph{execution-stable} if the following hold.
\begin{enumerate}
    \item For every program $P\in \PBox_U$, the denotation of its transport $f_P(P)\in \PBox'_U$ satisfies
    \[
    f_A\circ \Sem{P}_U \simeq \Sem{f_P(P)}_U\circ f_A
    \]
    wherever both sides are defined.
    \item For every oracle interface $\omega\in \OBox_U$, the validated import transition of its transport $f_O(\omega)$ satisfies the analogous commutation relation with $f_A$.
\end{enumerate}
\end{definition}

\begin{proposition}[Category of TAPO-objects]
For each fixed context $U$, TAPO-objects over $U$ and execution-stable morphisms between them form a category.
\end{proposition}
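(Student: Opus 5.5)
To prove that TAPO-objects over a fixed $U$ with execution-stable morphisms form a category, I will verify the category axioms componentwise. For $X_U=(\State_U,\PBox_U,\OBox_U)$ the identity is $\mathrm{id}_{X_U}=(\mathrm{id}_T,\mathrm{id}_A,\mathrm{id}_P,\mathrm{id}_O)$. For $f:X_U\to Y_U$ and $g:Y_U\to Z_U$ the composite is
\[
g\circ f=(g_T\circ f_T,\ g_A\circ f_A,\ g_P\circ f_P,\ g_O\circ f_O).
\]
Since each component is an ordinary (possibly partial) function, associativity and the unit laws hold componentwise. What must actually be checked is closure: the identity is an execution-stable morphism, and so is a composite of execution-stable morphisms.

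\emph{Identities.} The four clauses of the morphism definition are immediate for identities. An identity preserves inclusions in $\TBox$ and assertions in $\ABox_U$, leaves programs and oracle interfaces unchanged, and hence preserves their evaluations and validated imports. For execution-stability, the transported program is $P$ itself, so the required relation reads $\mathrm{id}\circ\Sem{P}_U\simeq\Sem{P}_U\circ\mathrm{id}$, which holds trivially. The oracle clause is the same argument applied to the partial endomap $\omega^{\mathrm{val}}_U$ from the Certified oracle transition proposition.

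\emph{Composites.} The four morphism clauses are stable under composition because each is a preservation property, and preservation composes. For execution-stability, take $P\in\PBox_U$ and paste the two commuting squares:
\[
(g_A\circ f_A)\circ\Sem{P}_U\simeq g_A\circ\Sem{f_P(P)}_U\circ f_A\simeq \Sem{g_P(f_P(P))}_U\circ g_A\circ f_A .
\]
The first step uses execution-stability of $f$, the second that of $g$ applied to the program $f_P(P)\in\PBox'_U$. The oracle clause is handled by the same pasting, using $f_O$, $g_O$ and the validated-import transitions.

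\emph{Main obstacle: partiality.} Both the denotations $\Sem{P}_U$ and the map $\omega^{\mathrm{val}}_U$ are partial, and $\simeq$ is only required ``wherever both sides are defined''. Relations of this kind need not compose: the middle term of the chain may be undefined even when the two outer terms are defined. My first approach is to read $\simeq$ as \emph{directed} (Kleene-style) equality: whenever the left side is defined, so is the right, and they agree. This is the reading suggested by clause (3) of the morphism definition, where $f_P$ ``preserves evaluations whenever defined''. Under this reading, definedness propagates along the chain left to right, so the pasting argument goes through. If only the weaker symmetric reading is available, I would strengthen the definition of execution-stability to this directed form and note that the category statement holds for it. I would also note that composing partial maps is associative, which settles the remaining axiom.
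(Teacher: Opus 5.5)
Your proof is correct and follows the same route as the paper: identities are trivially execution-stable, and composites are handled by pasting the two commutation squares, which is what the paper compresses into the phrase ``inherited by composition.'' Your treatment of partiality sharpens a point the paper's one-line argument passes over: under the symmetric ``wherever both sides are defined'' reading the squares need not paste when the middle term $\Sem{f_P(P)}_U(f_A(\cdot))$ is undefined, so the directed reading suggested by clause (3) of the morphism definition is exactly what makes the inheritance claim valid.
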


\begin{proof}
Identity morphisms are execution-stable by definition. If $f:X_U\to Y_U$ and $g:Y_U\to Z_U$ are execution-stable, then preservation of TBox data, ABox assertions, program evaluations, and validated oracle imports is inherited by composition. Hence $g\circ f$ is again execution-stable.
\end{proof}

\begin{remark}
This formulation is closer to the original TAPO-DL presentation than the earlier quadruple-style description, because the PBox and OBox are now visibly attached to the localized state $\State_U=(\TBox,\ABox_U)$ on which they operate.
\end{remark}

\section{The TBox and ABox as categorical data}\label{sec:ta}
The TBox and ABox form the static core of the theory. This is the part most closely aligned with existing category semantics for ordinary description logic \cite{LeDuc2021,BrieulleLeDucVaillant2022,LeDucBrieulle2025}.

\subsection{The TBox}
A TBox is interpreted as a system of structural constraints. In a category-theoretic language, concept inclusions may be read as arrows, subobjects, or designated entailment morphisms, depending on the semantic level one chooses.

\begin{definition}[Categorical TBox semantics]
A \emph{categorical TBox semantics} assigns to each concept symbol $C$ an object $\Sem{C}$ and to each valid inclusion $C\sqsubseteq D$ in $\TBox$ a distinguished morphism
\[
\Sem{C}\longrightarrow \Sem{D}
\]
in a background category $\mathcal{E}$ with sufficient logical structure.
\end{definition}

\subsection{The ABox}
The ABox records individuals and role assertions. In a categorical semantics this may be encoded by generalized elements, sections, or points relative to context.

\begin{definition}[Contextual ABox semantics]
For each context $U\in\Ctx$, an assertion $a:C@U$ is interpreted by a generalized element
\[
1_U\longrightarrow \Sem{C}(U),
\]
and a role assertion $(a,b):r@U$ is interpreted by a generalized element of the contextual relation object corresponding to $r$ over $U$.
\end{definition}

\begin{proposition}[Restriction of assertions]
Let $V\leq U$ in $\Ctx$. If $a:C@U$ or $(a,b):r@U$ is represented over $U$, then restriction induces a corresponding representation over $V$. In particular, contextual assertions are functorial under refinement.
\end{proposition}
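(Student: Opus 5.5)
The argument should be a direct consequence of the presheaf structure, applied to the semantic objects $\Sem{C}$ and to the contextual relation objects. I read the Contextual ABox semantics definition as saying that each concept $C$ is interpreted by a presheaf $\Sem{C}:\Ctx^{\mathrm{op}}\to\Set$, or more generally by an object of a presheaf-like background $\mathcal E$ whose components $\Sem{C}(U)$ carry restriction maps $\Sem{C}(U)\to\Sem{C}(V)$ for $V\leq U$. I read $1_U$ as the terminal object at stage $U$. Role names $r$ are interpreted likewise by relation objects $\Sem{r}$ with components $\Sem{r}(U)$. The plan is to define the restricted representation by composition with these restriction maps, and then check functoriality.

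First, let $a:C@U$ be represented by $x_a:1_U\to\Sem{C}(U)$. Since $1$ is terminal, we have $1_U\cong 1_V$, and restriction of the terminal presheaf is the identity. The representation over $V$ is then defined as
\[
x_a|_V \;=\; \Sem{C}(V\leq U)\circ x_a : 1_V\to\Sem{C}(V).
\]
For a role assertion $(a,b):r@U$, represented by $y_{ab}:1_U\to\Sem{r}(U)$, we set $y_{ab}|_V=\Sem{r}(V\leq U)\circ y_{ab}$ in the same way. This representation should correspond to the assertion $a:C@V$, respectively $(a,b):r@V$, in $\ABox_V$. To ensure this, I will invoke the requirement from the subsection on localization and restriction: $\rho_{UV}:\KB(U)\to\KB(V)$ is compatible with the four-layer decomposition. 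In particular it sends $\ABox_U$ to $\ABox_V$, and the semantic restriction is required to match the syntactic one.

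Second, functoriality. For $U$ itself, $\Sem{C}(U\leq U)=\mathrm{id}$, so nothing changes. For $W\leq V\leq U$, the presheaf law $\Sem{C}(W\leq V)\circ\Sem{C}(V\leq U)=\Sem{C}(W\leq U)$ gives $(x_a|_V)|_W=x_a|_W$. The same argument applies to roles. Together these show that assertions, viewed as generalized elements, form a subpresheaf of the presheaf of sections.

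The main obstacle is that the paper never fixes whether $\Sem{C}$ is a presheaf, nor how the semantic and syntactic restrictions are related. The substantive step is therefore the coherence between $\rho_{UV}$ on $\ABox_U$ and $\Sem{C}(V\leq U)$ on elements. My first attempt is to impose this explicitly as part of a model, namely naturality of the interpretation of assertions with respect to $\rho$. Once that is in place, the proposition reduces to the two verifications above.
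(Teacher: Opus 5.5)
Your argument is correct and is essentially the paper's own: the paper's proof is the one-line observation that the representation over $V$ is induced by the contravariance of the presheaf $\Sem{C}$ (resp.\ of the relation object for $r$). That is exactly your composition with $\Sem{C}(V\leq U)$, together with the presheaf identity and composition laws. The naturality condition you impose between $\rho_{UV}$ and the semantic restriction maps does not appear in the paper. It is needed only for the stronger reading that the restricted element represents an assertion actually lying in $\ABox_V$, which the paper leaves implicit.
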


\begin{proof}
This is immediate from the contravariance of the presheaf associated with the relevant concept or relation object.
\end{proof}

\begin{proposition}[Preservation under TAPO-morphisms]
Let $f:X_U\to Y_U$ be a morphism of TAPO-objects. If $f$ respects the distinguished morphisms associated with terminological inclusions and the generalized elements representing contextual assertions, then every TBox entailment or ABox assertion valid in $X_U$ is carried to a corresponding valid entailment or assertion in $Y_U$.
\end{proposition}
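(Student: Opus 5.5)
The plan is to argue by structural induction on derivations in the static fragment of the core inference system of \cref{sec:proofsys}. That fragment consists of the rules $(\mathrm{T\text{-}Sub})$, $(\sqcap\mathrm{I})$, $(\sqcap\mathrm{E}_{1,2})$, $(\exists\mathrm{I})$, $(\mathrm{A\text{-}Ax})$ and $(\mathrm{R\text{-}Ax})$, together with whatever closure rules generate the terminological judgments $\TBox\vdash C\sqsubseteq D$. First we fix what ``valid'' means. A TBox entailment $C\sqsubseteq D$ is valid in $X_U$ when it is witnessed by a composite of distinguished morphisms $\Sem{C}\to\Sem{D}$ in $\mathcal E$, in the sense of the categorical TBox semantics. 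An assertion $a:C@U$ is valid when it is witnessed by a generalized element $1_U\to\Sem{C}(U)$. The hypothesis on $f$ is then read as two conditions. First, $f_T$ sends each distinguished morphism $\Sem{C}\to\Sem{D}$ of the source to the distinguished morphism $\Sem{f_T C}\to\Sem{f_T D}$ of the target. Second, $f_A$ sends each generalized element representing an ABox assertion to the generalized element representing the transported assertion.

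Step one handles the terminological part. Since $f_T$ preserves distinguished morphisms and composition (functoriality of transport in $\mathcal E$), every composite witnessing $C\sqsubseteq D$ is carried to a composite witnessing $f_T C\sqsubseteq f_T D$ in $Y_U$. The same argument covers identities and the structural rules (reflexivity, transitivity) that generate terminological consequence.

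Step two handles assertions by induction on the derivation of $\State_U\vdash a:C@U$ or $\State_U\vdash(a,b):r@U$. The base cases $(\mathrm{A\text{-}Ax})$ and $(\mathrm{R\text{-}Ax})$ follow directly from clause (2) of the definition of morphisms of TAPO-objects together with the hypothesis on generalized elements. For $(\mathrm{T\text{-}Sub})$, we take the transported element $1_U\to\Sem{f_T C}(U)$ and compose it with the image of the inclusion morphism from step one. For $(\sqcap\mathrm{I})$ and $(\sqcap\mathrm{E}_i)$, we use the reading of $\Sem{C\sqcap D}$ as a product or meet in $\mathcal E$, so pairing and projections are preserved. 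For $(\exists\mathrm{I})$, we combine the transported role element and the transported element of $\Sem{C}$ through the image-type interpretation of $\exists r.C$ assumed to be available in $\mathcal E$.

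The main obstacle is that the statement leaves implicit that transport commutes with the concept constructors: that $f_T$ sends $\Sem{C\sqcap D}$ to $\Sem{f_TC\sqcap f_TD}$ and $\Sem{\exists r.C}$ to $\Sem{\exists f_T r.f_T C}$. The route I would try first is to make this an explicit part of ``respects the distinguished morphisms''. Concretely, $f$ would be required to preserve finite meets and the existential image structure of the background category $\mathcal E$, or equivalently to be given by a logical functor on the relevant fragment. Under that reading the inductive cases close immediately, and the proposition follows.
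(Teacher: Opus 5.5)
You take a different route from the paper, and as written your argument proves a variant of the proposition rather than the proposition itself.

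The paper's proof is a direct push-forward, and it is already contained in the two conditions you write down, provided you apply them to \emph{every} valid inclusion and assertion rather than only to the generating data. In the paper's semantics each valid inclusion $C\sqsubseteq D$ carries its own distinguished morphism $\Sem{C}\to\Sem{D}$, so no composites are needed. Each valid assertion $a:C@U$ is represented by a generalized element $1_U\to\Sem{C}(U)$. The hypothesis that $f$ respects these data says precisely that their images under $f$ represent the transported inclusion or assertion in $Y_U$. That is the whole argument: it needs no induction over the static rules and no compatibility of $f$ with $\sqcap$ or $\exists$. Your ``main obstacle'' appears only because you restrict the second condition to members of $\ABox_U$ and then try to regenerate every other valid assertion from them.

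That restriction costs you more than the constructor-preservation assumption you flag. You define validity semantically, as the existence of a witnessing generalized element, but induction on the static rules reaches only \emph{derivable} assertions. That calculus is deliberately modest, and the paper leaves completeness open. For example, if $\ABox_U$ contains $a:\forall r.C@U$ and $(a,b):r@U$, then $b:C@U$ is valid in the intended semantics but has no derivation, since no rule uses $\forall$. Moreover, an assertion that merely happens to hold in the interpretation attached to $X_U$ need not follow from $\ABox_U$ at all. For such an assertion your narrow reading of the hypothesis gives no reason for its transport to be valid in $Y_U$.

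So what you actually prove has a stronger hypothesis on $f$ and a conclusion restricted to derivable assertions. That variant is worth having: it says a morphism need only be checked on the generating TBox/ABox data, provided it commutes with the concept constructors. The paper's version is nearly tautological, since the hypothesis already is the conclusion. In exchange it needs no calculus and covers every valid inclusion and assertion.
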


\begin{proof}
The preservation of TBox data follows from functoriality on the distinguished entailment morphisms, and the preservation of ABox data follows from the image of the corresponding generalized elements under $f$.
\end{proof}

\section{The PBox as a minimal imperative layer}\label{sec:pbox}
We now move closer to the notation of the original TAPO-DL paper \cite{InoueTAPO}. The PBox is not a set of inference rules, but a programmable layer over the localized knowledge state
\[
\State_U=(\TBox,\ABox_U).
\]

\subsection{Guards}
\begin{definition}[Guards]
Guards over $\State_U=(\TBox,\ABox_U)$ are generated by
\[
\varphi ::= \top \mid \bot \mid \alpha \mid \neg\varphi \mid (\varphi \wedge \varphi),
\]
where atomic guards are of the form
\[
\alpha ::= a:C@U \mid (a,b):r@U \mid (C\sqsubseteq D).
\]
Their satisfaction relation is defined by
\begin{align*}
(\TBox,\ABox_U)\models a:C@U &\iff a:C@U\in \ABox_U,\\
(\TBox,\ABox_U)\models (a,b):r@U &\iff (a,b):r@U\in \ABox_U,\\
(\TBox,\ABox_U)\models (C\sqsubseteq D) &\iff \TBox\vdash_{\mathrm{DL}} C\sqsubseteq D,
\end{align*}
with Boolean connectives interpreted classically.
\end{definition}

\subsection{Program syntax}
\begin{definition}[The minimal imperative language $\mathbf P_0$]
The language $\mathbf P_0$ of localized PBox programs is generated by
\[
P ::= \mathbf{skip}
 \mid \mathbf{add}\;\beta
 \mid \mathbf{del}\;\beta
 \mid P;P
 \mid \mathbf{if}\;\gamma\;\mathbf{then}\;P\;\mathbf{else}\;P
 \mid \mathbf{while}\;\gamma\;\mathbf{do}\;P,
\]
where $\beta$ ranges over ABox assertions $a:C@U$ and $(a,b):r@U$, and $\gamma$ ranges over the guard language $\mathcal G_U$ of \cref{sec:guards}.
\end{definition}

\begin{remark}
This is exactly the style of programmable procedural layer emphasized in \cite{InoueTAPO}. The categorical step taken here is not to alter the syntax, but to interpret such programs as partial endomaps of localized knowledge states.
\end{remark}

\subsection{Operational semantics and denotations}
We write
\[
\langle P,\State_U\rangle \Downarrow \State'_U
\]
for the big-step evaluation relation, understood relative to the guard profile carried by the current procedural configuration at $U$. Thus the clauses for conditionals and while-loops consult the guard judgment of \cref{sec:guards}, while the remaining clauses are the usual ones for \textbf{skip}, \textbf{add}, \textbf{del}, and sequencing.

\begin{definition}[Program denotation]
For each program $P\in \mathbf P_0$, its denotation at context $U$ is the partial map
\[
\Sem{P}_U:\State_U\dashrightarrow \State_U
\]
given by
\[
\Sem{P}_U(\State_U)=\State'_U
\quad\Longleftrightarrow\quad
\langle P,\State_U\rangle \Downarrow \State'_U.
\]
\end{definition}

\begin{definition}[PBox semantics]
A \emph{PBox over $U$} is a chosen class
\[
\PBox_U\subseteq \mathbf P_0
\]
of admissible programs whose denotations are closed under the semantic operations that are intended to be available at context $U$.
\end{definition}

\begin{proposition}[Sequential compositionality]
If $P_1,P_2\in \PBox_U$ and both denotations are defined on a localized state $\State_U$, then
\[
\Sem{P_2;P_1}_U = \Sem{P_2}_U\circ \Sem{P_1}_U
\]
wherever the composite is defined. In particular, the PBox carries an intrinsic compositional structure.
\end{proposition}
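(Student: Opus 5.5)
The plan is to read the equation directly off the big-step semantics, once two preliminary points are settled. The first is the \emph{order of composition}. By rule $(\mathrm{Seq})$ of \cref{sec:proofsys}, the program $P;Q$ runs $P$ first and then $Q$. With $\circ$ read as ordinary (right-to-left) composition, the identity that actually follows is therefore
\[
\Sem{P;Q}_U=\Sem{Q}_U\circ\Sem{P}_U .
\]
I will prove this form. The displayed equation of the proposition is exactly this identity for the program $P_1;P_2$, and I would record a remark that $P_2;P_1$ there must be read accordingly: either as $P_1;P_2$, or with $\circ$ taken in diagrammatic order. Both sides are partial maps, so ``$=$'' will mean Kleene equality, i.e.\ equal domains of definition and equal values on them. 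Once this is shown, the hypothesis that both denotations be defined is only needed to phrase the ``wherever defined'' clause.

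The second point is \emph{well-definedness}: the definition of $\Sem{P}_U$ as a partial \emph{map} tacitly needs $\langle P,\State_U\rangle\Downarrow\State'_U$ to be deterministic. I would establish this in two steps.
\begin{enumerate}
\item The guard judgment is functional. By induction on $\gamma\in\mathcal G_U$, for each $\gamma$ exactly one $b\in\{\mathbf t,\mathbf f\}$ satisfies $(\State_U,\PBox_U)\Downarrow_g\gamma:b$. The base case uses that $\chi_U$ is a total function on $G_U$, together with the clauses for $\top$ and $\bot$. The Boolean clauses then propagate existence and uniqueness of the truth value.
\item Evaluation is deterministic. By induction on the height of evaluation derivations, if $\langle P,\State_U\rangle\Downarrow\State'$ and $\langle P,\State_U\rangle\Downarrow\State''$ then $\State'=\State''$. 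For each syntactic form of $P$ exactly one rule applies, except for \textbf{if} and \textbf{while}. For those, step 1 excludes having both the $\mathbf t$-rule and the $\mathbf f$-rule available.
\end{enumerate}

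The core step is an \emph{inversion lemma} for sequencing. The only rule whose conclusion has the form $P;Q:\State_U\rightsquigarrow\State''_U$ is $(\mathrm{Seq})$. Hence $\langle P;Q,\State_U\rangle\Downarrow\State''_U$ holds if and only if there is some $\State'_U$ with $\langle P,\State_U\rangle\Downarrow\State'_U$ and $\langle Q,\State'_U\rangle\Downarrow\State''_U$. By determinacy this intermediate $\State'_U$ is unique and equals $\Sem{P}_U(\State_U)$. Reading the biconditional in both directions gives
\begin{itemize}
\item equality of domains: $\State_U\in\operatorname{dom}\Sem{P;Q}_U$ iff $\State_U\in\operatorname{dom}\Sem{P}_U$ and $\Sem{P}_U(\State_U)\in\operatorname{dom}\Sem{Q}_U$;
\item equality of values on that common domain.
\end{itemize}
Together these are exactly Kleene equality of $\Sem{P;Q}_U$ and $\Sem{Q}_U\circ\Sem{P}_U$. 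The ``in particular'' clause then follows. Since $\mathbf{skip}$ is the identity by $(\mathrm{Skip})$, and associativity of $;$ up to denotation follows from associativity of composition of partial maps, the denotations of $\PBox_U$ form a monoid of partial endomaps whenever $\PBox_U$ is closed under $;$.

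The step I expect to be the main obstacle is making the guard profile well-defined along a run. The guard judgment is indexed by $(\State_U,\PBox_U)$, yet it is generated by a profile $\chi_U$ ``carried by the current procedural configuration''. After $P$ changes the state to $\State'_U$, the judgment must be evaluated relative to $(\State'_U,\PBox_U)$. So I must fix how $\chi$ depends on the state. I would first try the simplest reading: $\chi_U$ is a fixed function of the configuration $(\State_U,\PBox_U)$, for instance induced from membership in $\ABox_U$ as in the guard semantics of \cref{sec:pbox}. Then determinacy, and hence the argument above, goes through unchanged. If $\chi$ is instead allowed to vary independently along a run, I would make it part of the state, so that denotations act on pairs $(\State_U,\chi_U)$, and rerun the same induction on that enlarged state space.
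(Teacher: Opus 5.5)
Your proposal is correct and follows the paper's route. The paper's entire proof is the one-line appeal to the standard big-step clause for sequencing; your determinacy argument (functionality of the guard judgment, then syntax-directed induction), the inversion of $(\mathrm{Seq})$, and the Kleene-equality bookkeeping are exactly the details that appeal leaves implicit.

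You are also right about the order of composition. $(\mathrm{Seq})$ runs $P$ before $Q$, and the paper uses right-to-left $\circ$ elsewhere: in the $(\mathrm{Seq})$ case of its soundness proof, and in the commutation conditions for execution-stable morphisms and for restriction. So the valid identity is $\Sem{P;Q}_U=\Sem{Q}_U\circ\Sem{P}_U$. The display as printed fails for $P_1=\mathbf{add}\;\beta$ and $P_2=\mathbf{del}\;\beta$, where both denotations are total but one side contains $\beta$ and the other does not. It must therefore be read with $P_1;P_2$.
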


\begin{proof}
This is the standard big-step semantics for sequencing.
\end{proof}

\begin{proposition}[Closure under conditionals]
Assume $\PBox_U$ is closed under the program-forming operation
\[
(P_1,P_2,\gamma)\longmapsto \mathbf{if}\;\gamma\;\mathbf{then}\;P_1\;\mathbf{else}\;P_2.
\]
Then every such conditional built from members of $\PBox_U$ again belongs to $\PBox_U$.
\end{proposition}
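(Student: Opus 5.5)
The statement is essentially tautological as written: its hypothesis already asserts that $\PBox_U$ is closed under the conditional-forming operation. So the plan is to unwind the hypothesis and then add the semantic content that makes the proposition meaningful. That content is a description of $\Sem{\mathbf{if}\;\gamma\;\mathbf{then}\;P_1\;\mathbf{else}\;P_2}_U$ as a case split of the denotations $\Sem{P_1}_U$ and $\Sem{P_2}_U$ along the guard judgment.

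The steps, in order:

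\begin{enumerate}
\item \emph{Syntactic membership.} Take $P_1,P_2\in\PBox_U$ and $\gamma\in\mathcal G_U$. The conditional $\mathbf{if}\;\gamma\;\mathbf{then}\;P_1\;\mathbf{else}\;P_2$ is a well-formed term of $\mathbf P_0$ by the grammar of the minimal imperative language. It lies in $\PBox_U$ by the closure assumption. This already proves the literal claim.

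\item \emph{Totality of guards.} Because the guard profile $\chi_U$ is a total map $G_U\to\{\mathbf t,\mathbf f\}$, a structural induction on $\gamma$ using the Boolean clauses of \cref{sec:guards} shows the following. For each state, exactly one of $(\State_U,\PBox_U)\Downarrow_g\gamma:\mathbf t$ and $(\State_U,\PBox_U)\Downarrow_g\gamma:\mathbf f$ holds.
 \begin{itemize}
 \item Existence of a value follows by induction.
 \item Uniqueness needs the "least relation" reading. The $\wedge$ and $\vee$ clauses must be checked so that a true and a false verdict cannot both arise. This is also an inductive check.
 \end{itemize}

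\item \emph{Denotation as a case split.} Using the big-step rules (If-T) and (If-F), I will show
\[
\Sem{\mathbf{if}\;\gamma\;\mathbf{then}\;P_1\;\mathbf{else}\;P_2}_U(\State_U)=
\begin{cases}\Sem{P_1}_U(\State_U) & \text{if } \gamma \text{ evaluates to } \mathbf t,\\ \Sem{P_2}_U(\State_U) & \text{if } \gamma \text{ evaluates to } \mathbf f,\end{cases}
\]
with definedness on each side coinciding. This uses that the conditional's evaluation relation is derived only via those two rules. Step 2 then guarantees that the result is a genuine partial function, not a relation.
\end{enumerate}

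The main obstacle is that the paper's PBox definition speaks of denotations being closed under "intended semantic operations" without saying what those are. I would resolve this by reading the proposition as: syntactic closure implies semantic closure under guarded case-split. Step 3 supplies exactly that, and it is the only place requiring care.
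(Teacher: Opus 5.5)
Your Step 1 is exactly the paper's proof, which records the claim as immediate from the closure hypothesis: the conditional is a $\mathbf P_0$-term built from members of $\PBox_U$, so it lies in $\PBox_U$. Steps 2 and 3 are correct but go beyond the statement. The guarded case-split description of $\Sem{\mathbf{if}\;\gamma\;\mathbf{then}\;P_1\;\mathbf{else}\;P_2}_U$ is extra semantic content that this proposition neither asserts nor needs.
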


\begin{proof}
This is immediate from the closure assumption.
\end{proof}

\begin{remark}
The preceding proposition is the direct categorical analogue of the original TAPO-DL idea that conditionals are internal program constructors, not meta-level instructions.
\end{remark}

\begin{definition}[Iterative partiality]
For a guard $\gamma$ and a program $P\in \PBox_U$, the loop
\[
\mathbf{while}\;\gamma\;\mathbf{do}\;P
\]
induces a partial endomap
\[
\Sem{\mathbf{while}\;\varphi\;\mathbf{do}\;P}_U:\State_U\dashrightarrow \State_U
\]
which is defined exactly on those states for which the iteration terminates.
\end{definition}

\begin{remark}
Thus the PBox naturally accommodates partiality. This matches the original proof-theoretic intuition of TAPO-DL, where \textbf{while}-constructs need not terminate \cite{InoueTAPO}.
\end{remark}

\begin{proposition}[Restriction compatibility of procedures]
Suppose $V\leq U$ and the procedural layer is stable under contextual restriction. Then for every $P\in \PBox_U$ there exists a restricted program $P|_V\in \PBox_V$ such that
\[
\rho_{UV}\circ \Sem{P}_U \simeq \Sem{P|_V}_V\circ \rho_{UV}
\]
wherever both sides are defined.
\end{proposition}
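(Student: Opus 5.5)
The plan is to construct $P|_V$ by structural induction on $P\in\mathbf P_0$ and then check the commutation square by induction on the big-step derivation $\langle P,\State_U\rangle\Downarrow\State_U'$. The hypothesis that the procedural layer is stable under restriction will be read as three concrete requirements. First, $\rho_{UV}$ acts on the ABox componentwise by a map on assertions, $\beta\mapsto\beta|_V$, which commutes with the Boolean operations on ABoxes used by the program constructs. Second, there is a restriction map $\mathcal G_U\to\mathcal G_V$, $\gamma\mapsto\gamma|_V$, compatible with the guard profiles, so that $(\State_U,\PBox_U)\Downarrow_g\gamma:b$ implies $(\rho_{UV}\State_U,\PBox_V)\Downarrow_g\gamma|_V:b$. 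Third, $\PBox_V$ contains the restricted programs.

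The restriction is defined clause by clause:
\[
\mathbf{skip}|_V=\mathbf{skip},\qquad (\mathbf{add}\;\beta)|_V=\mathbf{add}\;\beta|_V,\qquad (\mathbf{del}\;\beta)|_V=\mathbf{del}\;\beta|_V,\qquad (P;Q)|_V=P|_V;Q|_V.
\]
The guarded constructs restrict their guard and both (or the single) subprogram: the conditional becomes $\mathbf{if}\;\gamma|_V\;\mathbf{then}\;P|_V\;\mathbf{else}\;Q|_V$, and the loop becomes $\mathbf{while}\;\gamma|_V\;\mathbf{do}\;P|_V$. Membership $P|_V\in\PBox_V$ is then exactly the stability assumption.

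The verification runs over the rules of the core PBox transition system. $(\mathrm{Skip})$ is trivial. For $(\mathrm{Add})$ and $(\mathrm{Del})$ we need
\[
\rho_{UV}(\TBox,\ABox_U\cup\{\beta\})=(\TBox,\rho(\ABox_U)\cup\{\beta|_V\}),
\]
and similarly for set difference. This uses the first requirement; for deletion it requires injectivity of restriction on assertions, or else we only get the inclusion, which is where "$\simeq$ wherever both sides are defined" will be invoked. $(\mathrm{Seq})$ follows from the inductive hypothesis together with the sequential compositionality proposition, by pasting two commuting squares. For $(\mathrm{If\text{-}T})$ and $(\mathrm{If\text{-}F})$, guard compatibility ensures the restricted state takes the same branch, and the inductive hypothesis on that branch closes the case. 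For $(\mathrm{While\text{-}F})$ the guard transports to $\mathbf f$, so both sides are the identity. For $(\mathrm{While\text{-}T})$ we induct on the height of the derivation, not on program structure, since the loop reappears in the premise. Because derivations of terminating runs are finite, this covers exactly the states where $\Sem{P}_U$ is defined. The clause "wherever both sides are defined" absorbs the remaining case in which the restricted loop terminates but the original does not.

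I expect the main obstacle to be guard compatibility. Guard profiles $\chi_U$ are arbitrary total assignments, so nothing in the definitions forces $\chi_V(\alpha|_V)=\chi_U(\alpha)$. Moreover, the semantically defined guards of the PBox section (membership in the ABox) need not be preserved by restriction. I would therefore make this an explicit part of "stable under contextual restriction," namely naturality of $\chi$ with respect to $\rho$. Given that, the Boolean clauses propagate it to all of $\mathcal G_U$ by a routine induction on $\gamma$. Without it the proposition fails, since a branch could flip under refinement.
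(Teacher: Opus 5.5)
Your argument is correct under the hypotheses you make explicit, but it takes a genuinely different route from the paper. The paper reads ``the procedural layer is stable under contextual restriction'' as already \emph{including} two things: the existence of $P|_V\in\PBox_V$, and the commutation $\rho_{UV}\circ\Sem{P}_U\simeq\Sem{P|_V}_V\circ\rho_{UV}$ itself. Its proof is therefore a one-line appeal to the assumption.

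You instead unpack stability into more primitive data:
\begin{itemize}
\item a restriction $\beta\mapsto\beta|_V$ on assertions, compatible with $\cup\{\beta\}$ and $\setminus\{\beta\}$;
\item a restriction $\gamma\mapsto\gamma|_V$ on guards, natural with respect to the guard profiles;
\item closure of $\PBox_V$ under restricted programs.
\end{itemize}
You then define $P|_V$ by structural recursion and \emph{derive} the square by induction on the height of big-step derivations.

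Your approach buys three things. First, it turns the statement into a real theorem with checkable hypotheses. Second, it gives a slightly stronger conclusion: definedness of the left side forces definedness and agreement of the right side, so the clause ``wherever both sides are defined'' only matters for runs that do not terminate at $U$. Third, it diagnoses exactly where the statement can fail, namely branch flipping when $\chi_V(\alpha|_V)\neq\chi_U(\alpha)$. The paper's reading, by contrast, needs no assumption on how $\rho_{UV}$ acts on individual assertions or guards, but at the price of making the proposition tautological.

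You should drop one hedge in your $(\mathrm{Del})$ case. If restriction on assertions is not injective, you only get $\rho(\ABox_U)\setminus\{\beta|_V\}\subseteq\rho(\ABox_U\setminus\{\beta\})$. The clause ``wherever both sides are defined'' cannot repair this, because $\mathbf{del}\;\beta$ is total: both sides are defined but unequal. Your first requirement (commutation with the set operations used by $\mathbf{add}$ and $\mathbf{del}$) is exactly what makes the case go through, so it must be a genuine hypothesis, not an optional strengthening.

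Similarly, in the $(\mathrm{Seq})$ case you do not need the paper's sequential-compositionality proposition, which is in any case stated with the order $\Sem{P_2;P_1}_U=\Sem{P_2}_U\circ\Sem{P_1}_U$. The two induction hypotheses followed by one application of $(\mathrm{Seq})$ at $V$ suffice.
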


\begin{proof}
This is the procedural analogue of functoriality for contextual assertions. Stability under restriction provides the restricted program, and the commutation statement is precisely the assumed contextual compatibility.
\end{proof}

\begin{remark}
It is worth noting, as a nearby line of motivation rather than a direct semantic precursor, that recent work in NLP-assisted formal modeling has attempted to extract from natural-language specifications intermediate structures built from predicates, functions, and pre-/post-condition patterns \cite{GhoshEtAl2022SpecNFS}. From the present viewpoint, such representations are relevant mainly as a syntactic front end for procedural information. The role of the PBox in TAPO-description logic is different: it is meant to internalize that procedural layer as mathematical structure inside the logic itself.
\end{remark}

\section{Sheaf-theoretic refinement}\label{sec:sheaf}
The presheaf picture suggests a genuine sheaf-theoretic strengthening.

\begin{definition}[Sheaf of TAPO-states]
A presheaf of TAPO-states is a \emph{sheaf} if locally compatible families of TAPO-objects admit unique gluing.
\end{definition}

\begin{theorem}[Schematic local-to-global principle]
Assume the TAPO-state assignment forms a sheaf on $\Ctx$, and assume that both procedural denotations and strengthened oracle frames are restriction-compatible. Then compatible local TAPO-updates glue to a global TAPO-update.
\end{theorem}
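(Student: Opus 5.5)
To make the statement precise, I will first fix what a ``local TAPO-update'' and ``compatible'' mean. Take a covering family $\{U_i\leq U\}_{i\in I}$ of $U$ in $\Ctx$ and a global state $X\in\KB(U)$ with restrictions $X_i=\rho_{UU_i}(X)$. A local update is a state $X_i'\in\KB(U_i)$ obtained from $X_i$ in one of two ways. It may be a procedural denotation, $X_i'=\Sem{P_i}_{U_i}(X_i)$. Or it may be a validated oracle transition, $X_i'=\omega^{\mathrm{val}}_{U_i}(X_i)$, as in the Proposition on certified oracle transitions. The family is compatible if, for every $W\leq U_i,U_j$, we have $\rho_{U_iW}(X_i')=\rho_{U_jW}(X_j')$. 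The global update to be produced is a state $X'\in\KB(U)$ with $\rho_{UU_i}(X')=X_i'$ for all $i$. When the local updates come from a common global program $P$ or oracle frame $\omega$, I will additionally show that $X'$ coincides with $\Sem{P}_U(X)$, respectively $\omega^{\mathrm{val}}_U(X)$, whenever the latter is defined.

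The argument then runs in three steps.

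\textbf{Existence.} Compatibility of $(X_i')_i$ is exactly the hypothesis of the sheaf condition for $\KB$. Gluing therefore gives a unique $X'\in\KB(U)$ restricting to each $X_i'$.

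\textbf{Coherence with restriction.} I use the Proposition on restriction compatibility of procedures and the Proposition on restriction of validated oracle imports. Together they give
\[
\rho_{UU_i}\bigl(\Sem{P}_U(X)\bigr)=\Sem{P|_{U_i}}_{U_i}(X_i)
\]
and the analogous identity for $\omega$, wherever both sides are defined. So if $P_i=P|_{U_i}$, the global output $\Sem{P}_U(X)$, when defined, is a second amalgamation of the family $(X_i')_i$. Uniqueness of gluing then forces it to equal $X'$.

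\textbf{Layer compatibility.} The glued object must again be a TAPO-state, meaning it respects the four-layer decomposition and stays $\TBox$-compatible. For this I use that $\TBox$ is constant across contexts. I also use that soundness of the oracle frames guarantees each $X_i'$ is $\TBox$-compatible. Since the sheaf is a sheaf of such states, the amalgamation automatically lies in $\KB(U)$.

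The main obstacle is \emph{partiality}. The commutation relations hold only ``wherever both sides are defined''. As a result, local definedness of $\Sem{P|_{U_i}}_{U_i}(X_i)$ for every $i$ need not imply that $\Sem{P}_U(X)$ is defined; a while-loop could terminate locally but not globally. I would resolve this by stating the theorem for glued updates as states, which is always possible by the existence step. I would identify the result with the global denotation only under an additional \emph{local definedness reflection} hypothesis: that $\Sem{P}_U$ (resp.\ $\omega^{\mathrm{val}}_U$) is defined at $X$ whenever all its restrictions are. The natural place to try to derive this hypothesis is from guard profiles that are themselves compatible with restriction and glue. With such profiles, the big-step derivations of the local runs would assemble, by induction on the derivations (rules Seq, If-T/F, While-T/F), into a global derivation of the same shape.
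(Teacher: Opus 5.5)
Your argument is correct for the precise reading you fix. It uses the same two ingredients as the paper's sketch (the sheaf condition on states and restriction-compatibility of the PBox/OBox data), but it deploys them differently. The paper proposes to regard the local updates themselves as morphisms in a fibred or internal category and to glue them by descent. There, restriction-compatibility is what turns the local procedural and oracle data into a descent datum, and the result is a global update operation. You instead fix a single input $X$ and impose compatibility directly on the output states $X_i'$. Existence is then nothing but the sheaf condition on $\KB$, and restriction-compatibility enters only afterwards, via uniqueness of amalgamation, to identify $X'$ with $\Sem{P}_U(X)$ or $\omega^{\mathrm{val}}_U(X)$. Your route gives rigour without extra machinery and an honest treatment of partiality. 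The paper's descent sketch passes over the fact that local termination need not reflect to global definedness, and your definedness-reflection hypothesis is the kind of assumption any rigorous version of it would have to make. The cost is that the restriction-compatibility hypotheses do no work in the existence step, and the conclusion is a glued state for one input rather than a glued operation. Suppose you let $X$ vary and read compatibility as agreement of the local programs $P_i$ on overlaps, which is closer to the paper's ``compatibility of PBox and OBox data''. Then restriction-compatibility is what produces compatible outputs, but only when the restricted runs $\Sem{P_i|_W}_W(\rho_{UW}X)$ are defined. So partiality must also be controlled downward, from $U_i$ to $W$, and not only in the local-to-global direction you flagged.
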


\begin{proof}
At this stage the statement is schematic. The intended proof combines the sheaf condition for underlying local states with compatibility of PBox and OBox data under restriction. Once these data are formulated as morphisms in an appropriate fibred or internal category, the gluing follows from the corresponding descent condition.
\end{proof}

\begin{remark}
This is the point where a later topos-theoretic development becomes natural. Local knowledge, local procedures, and local oracle responses can then be studied internally, while global coherence is governed by descent.
\end{remark}

\begin{remark}
In particular, a trust- and certification-sensitive OBox suggests that descent should later be formulated not only for bare assertions, but also for validated provenance data. This would make the local-to-global passage sensitive to how imported information was certified, not merely to what was imported.
\end{remark}

\section{Soundness and the completeness problem}\label{sec:soundcomplete}
The proof theory is intended as a companion to the semantic structures already introduced.

\begin{theorem}[Schematic soundness]
Each of the preceding rules is sound with respect to the semantic skeleton of \cref{sec:ta,sec:pbox,sec:obox}. More precisely:
\begin{enumerate}
    \item if $\TBox\vdash C\sqsubseteq D$ or $\State_U\vdash \alpha$ is derivable by the static rules, then the corresponding categorical interpretation is valid in the sense of \cref{sec:ta};
    \item if $(\State_U,\PBox_U)\vdash P:\State_U\rightsquigarrow \State_U'$ is derivable, then $\Sem{P}_U(\State_U)=\State_U'$ whenever the denotation is defined;
    \item if $(\State_U,\OBox_U)\vdash q\xRightarrow{V}\State_U'$ is derivable, then $\State_U'$ coincides with the state produced by the validated oracle transition associated with the strengthened OBox.
\end{enumerate}
The mixed consultation rule is also sound: if
\[
(\State_U,\PBox_U,\OBox_U)\vdash \mathbf{consult}(q_x):\State_U\rightsquigarrow \State_U',
\]
then the target state is exactly the one obtained by the validated oracle update triggered from the hesitation state of $x$.
\end{theorem}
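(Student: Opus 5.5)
The plan is to argue by induction on the height of derivations, treating each of the five judgment forms separately and fixing the semantic data of \cref{sec:ta,sec:pbox,sec:obox} throughout: a background category $\mathcal E$ with the TBox and ABox interpretations, the guard profile $\chi_U$, and a sound strengthened oracle frame. For each judgment form the invariant is the statement of the corresponding clause. The induction step is then a case analysis on the last rule applied.

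For clause (1) I would verify each static rule in turn. The axioms $(\mathrm{A\text{-}Ax})$ and $(\mathrm{R\text{-}Ax})$ are valid because the contextual ABox semantics assigns a generalized element $1_U\to\Sem{C}(U)$ to every member of $\ABox_U$. For $(\mathrm{T\text{-}Sub})$, the distinguished morphism $\Sem{C}\to\Sem{D}$ is composed with the generalized element supplied by the induction hypothesis. For $(\sqcap\mathrm{I})$, $(\sqcap\mathrm{E}_i)$ and $(\exists\mathrm{I})$, I would use the requirement that $\mathcal E$ has ``sufficient logical structure.'' Concretely, I assume $\Sem{C\sqcap D}$ is the pullback (meet of subobjects) of $\Sem{C}$ and $\Sem{D}$, and $\Sem{\exists r.C}$ is the image of $\Sem{r}\times_{} \Sem{C}$ under the first projection. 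Under these assumptions the three rules become the universal property of the pullback and the factorization through the image. TBox entailments $\TBox\vdash C\sqsubseteq D$ are handled the same way, by closing the distinguished arrows under composition.

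For clause (2) I would first prove a lemma: the guard judgment $(\State_U,\PBox_U)\Downarrow_g\gamma:b$ agrees with the guard consulted by the big-step relation $\langle P,\State_U\rangle\Downarrow\State_U'$. This is a structural induction on $\gamma$, and it holds essentially by definition, because both are read off from $\chi_U$ by the Boolean clauses. Each PBox rule then matches a big-step clause. $(\mathrm{Skip})$, $(\mathrm{Add})$ and $(\mathrm{Del})$ are immediate. $(\mathrm{Seq})$ uses the sequential compositionality proposition of \cref{sec:pbox}, taking care with the paper's order of composition. $(\mathrm{If\text{-}T})$ and $(\mathrm{If\text{-}F})$ use the lemma together with the induction hypothesis. $(\mathrm{While\text{-}F})$ and $(\mathrm{While\text{-}T})$ are handled by induction on derivation height, not on program structure. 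This works because a finite derivation witnesses a terminating iteration, so the state lies in the domain of the partial map in the definition of iterative partiality and the value is the one computed. Determinacy of the big-step relation, which follows because $\chi_U$ is a total function, ensures that ``$\Sem{P}_U(\State_U)=\State_U'$'' is well defined.

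For clause (3), the $(\mathrm{Oracle\text{-}Accept})$ premises say exactly that $r=\operatorname{ans}_U(q)$ is validated. Hence $\operatorname{vimp}_U(r)=\operatorname{imp}_U(r)$, and the conclusion is the value of $\omega^{\mathrm{val}}_U$ given by the certified oracle transition proposition. For $(\mathrm{Oracle\text{-}Hold})$ I interpret the validated transition as the identity on non-validated responses.

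I expect the main obstacle to be this Hold case. The rule quantifies over a single $S$, whereas validation is existential over $S$. As a result, a response could be derivably ``held'' for one $S$ and accepted for another. I would first try to resolve this by restricting soundness to policies that are monotone or independent of $S$. Failing that, I would read the rule with $S$ ranging over all finite subsets of $\operatorname{cert}_U(r)$, so that Hold applies only to genuinely non-validated responses.

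The consultation rule then follows by composing clause (3) with the side condition: its conclusion's target is literally the oracle premise's target.
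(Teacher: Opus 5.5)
Your plan follows the paper's proof: induction on the last rule applied. The static rules are checked against the generalized-element semantics, the PBox rules are matched clause by clause to the big-step evaluation that defines $\Sem{P}_U$, and the oracle and consultation rules are read off from the certified oracle transition. Your explicit pullback/image assumptions, the guard-agreement and determinacy remarks, and your caution about the composition order in the sequencing proposition (which the paper states as $\Sem{P_2;P_1}_U=\Sem{P_2}_U\circ\Sem{P_1}_U$) only make explicit what the paper leaves implicit.

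Your worry about $(\mathrm{Oracle\text{-}Hold})$ is justified. The paper simply asserts that in every Hold case the validated import map is undefined, which ignores that validation is existential in $S$. Monotonicity of $\operatorname{val}_U$ in $S$ does not repair this: a policy may return $\mathsf{defer}$ on $S=\emptyset$ and $\mathsf{accept}$ on a larger $S$, so Accept and Hold both fire with different targets. Clause (3) holds for Hold only under your second reading, which requires non-acceptance for every finite $S\subseteq\operatorname{cert}_U(r)$, or under a policy whose acceptance is independent of $S$.
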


\begin{proof}
We argue by induction on the last rule used in the derivation.

\smallskip
\noindent\emph{(1) Static rules.}
The axiom rules $(\mathrm{A\text{-}Ax})$ and $(\mathrm{R\text{-}Ax})$ are immediate from the contextual ABox semantics of \cref{sec:ta}: an asserted judgment belongs to $\ABox_U$, hence is represented by the corresponding generalized element over $U$.

For $(\mathrm{T\text{-}Sub})$, suppose that $\TBox\vdash C\sqsubseteq D$ and $\State_U\vdash a:C@U$ are both derivable. By the induction hypothesis, the premise $a:C@U$ is interpreted by a generalized element
\[
1_U\longrightarrow \Sem{C}(U),
\]
and the inclusion $C\sqsubseteq D$ is interpreted by a distinguished morphism
\[
\Sem{C}\longrightarrow \Sem{D}
\]
in the categorical TBox semantics. Composing these arrows yields a generalized element
\[
1_U\longrightarrow \Sem{D}(U),
\]
which is exactly the interpretation of $a:D@U$.

For $(\sqcap\mathrm{I})$, if $a:C@U$ and $a:D@U$ are both derivable, then by induction they are represented by generalized elements into $\Sem{C}(U)$ and $\Sem{D}(U)$. In the intended semantics, conjunction is interpreted by the categorical meet or product-like conjunction object, so these two arrows induce an arrow into $\Sem{C\sqcap D}(U)$. This yields the soundness of conjunction introduction. The elimination rules $(\sqcap\mathrm{E}_1)$ and $(\sqcap\mathrm{E}_2)$ are then sound by composing with the corresponding projections.

For $(\exists\mathrm{I})$, assume that $(a,b):r@U$ and $b:C@U$ are derivable. By induction, we have a generalized element of the contextual relation object corresponding to $r$ together with a generalized element of $\Sem{C}(U)$ for $b$. These data determine a witness for the existential condition, hence a generalized element interpreting $a:\exists r.C@U$.

\smallskip
\noindent\emph{(2) Procedural rules.}
By \cref{sec:pbox}, the denotation $\Sem{P}_U$ is defined by the big-step evaluation relation
\[
\langle P,\State_U\rangle\Downarrow \State_U'.
\]
Thus each procedural rule is sound provided it agrees with the corresponding evaluation clause.

The rule $(\mathrm{Skip})$ is sound because the operational semantics of $\mathbf{skip}$ leaves the state unchanged, so $\Sem{\mathbf{skip}}_U$ is the identity on localized knowledge states. The rules $(\mathrm{Add})$ and $(\mathrm{Del})$ are sound because they coincide exactly with adjoining or removing the indicated ABox assertion from the current state.

For $(\mathrm{Seq})$, assume the premises derive
\[
(\State_U,\PBox_U)\vdash P:\State_U\rightsquigarrow \State_U'
\qquad\text{and}\qquad
(\State_U',\PBox_U)\vdash Q:\State_U'\rightsquigarrow \State_U''.
\]
By the induction hypothesis,
\[
\Sem{P}_U(\State_U)=\State_U'
\qquad\text{and}\qquad
\Sem{Q}_U(\State_U')=\State_U''.
\]
Hence the composite program satisfies
\[
\Sem{P;Q}_U(\State_U)=\State_U'',
\]
which is exactly the semantic content of the sequencing rule. This is the same compositionality recorded in \cref{sec:pbox}.

For $(\mathrm{If\text{-}T})$, if $(\State_U,\PBox_U)\Downarrow_g \gamma : \mathbf t$ is derivable, then the guard judgment selects the \emph{then}-branch. By induction on the premise for $P$, the resulting state is $\Sem{P}_U(\State_U)$, so the entire conditional has that same denotation. The case $(\mathrm{If\text{-}F})$ is analogous, except that the guard judgment derives $\mathbf f$ and the \emph{else}-branch is selected.

For $(\mathrm{While\text{-}F})$, if $(\State_U,\PBox_U)\Downarrow_g \gamma : \mathbf f$, then the loop terminates immediately and the target state remains $\State_U$, exactly as stated by the rule.

For $(\mathrm{While\text{-}T})$, assume that $(\State_U,\PBox_U)\Downarrow_g \gamma : \mathbf t$, that
\[
(\State_U,\PBox_U)\vdash P:\State_U\rightsquigarrow \State_U',
\]
and that
\[
(\State_U',\PBox_U)\vdash \mathbf{while}\;\gamma\;\mathbf{do}\;P:\State_U'\rightsquigarrow \State_U''.
\]
By the induction hypothesis, the first procedural premise is interpreted by $\Sem{P}_U(\State_U)=\State_U'$, and the second by the denotation of the loop started from $\State_U'$. Hence the entire loop is interpreted by first executing one step of $P$ and then iterating again. This is exactly the operational reading of the loop under guard-driven big-step evaluation.

\smallskip
\noindent\emph{(3) Oracle rules.}
For $(\mathrm{Query})$, the conclusion merely records the value of the partial answer map $\operatorname{ans}_U$. Hence any derivable query-response judgment agrees with the strengthened oracle frame by definition.

For $(\mathrm{Oracle\text{-}Accept})$, assume that
\[
(\State_U,\OBox_U)\vdash q\leadsto r,
\qquad
\operatorname{tr}_U(r)\succeq \ell_U,
\qquad
\operatorname{val}_U(r,S)=\mathsf{accept}.
\]
Then $r$ is a validated response in the sense of \cref{sec:obox}. Therefore $\operatorname{vimp}_U(r)$ is defined and agrees with $\operatorname{imp}_U(r)$ on that response. The conclusion of the rule is precisely the state obtained by adjoining the validated imported assertions to $\ABox_U$. By the soundness requirement imposed on strengthened OBoxes, those imported assertions remain compatible with $\TBox$, so the target is again a legitimate localized state. Hence the rule agrees exactly with the certified oracle transition of \cref{sec:obox}.

For $(\mathrm{Oracle\text{-}Hold})$, either the trust threshold is not met or the validation policy rejects or defers the response. In all such cases the validated import map is undefined, so no new ABox information may be integrated. The semantic effect is therefore the identity transition on $\State_U$, exactly as stated by the rule.

\smallskip
\noindent\emph{(4) The consultation rule.}
Assume that the last rule is $(\mathrm{Consult})$. Then we have a derivation of
\[
\State_U\vdash x:\mathsf{ReviewConsultationNeeded}@U
\]
and a derivation of
\[
(\State_U,\OBox_U)\vdash q_x\xRightarrow{V}\State_U'.
\]
By the static soundness already proved, the hesitation assertion is semantically valid in the current state. By oracle soundness, the second premise identifies $\State_U'$ with the output of the validated oracle update associated with $q_x$. The consultation rule does not add any independent semantic transformation beyond triggering that validated update from an admissible hesitation state. Therefore the conclusion
\[
(\State_U,\PBox_U,\OBox_U)\vdash \mathbf{consult}(q_x):\State_U\rightsquigarrow \State_U'
\]
is sound.

This completes the induction.
\end{proof}

\begin{remark}
A full completeness theorem is deliberately postponed. The present place in the exposition is intentional: once both the inference layer and the categorical semantics have been displayed, the completeness problem can be stated in its proper form. In particular, any later completeness treatment should separate the static guard layer from the dynamic transition layer and explain how guard profiles are fixed or generated. Because the present paper is centered on categorical semantics and information behavior rather than on a maximal proof calculus, the main point here is to show that TAPO-description logic already supports a coherent core of derivable static, guard-based procedural, and oracle-sensitive judgments.
\end{remark}

\section{Information-behavior examples and discussion}\label{sec:examples}
The preceding framework is intended for the analysis of information behavior rather than for static ontology management alone. We therefore gather in one place a sequence of examples that illustrate how the refined TAPO-DL architecture treats procedural hesitation, external consultation, validation-sensitive import, and final action.

\subsection{Simple search behavior}
The first group of examples models elementary information-seeking behavior. The point is that even a simple search episode already involves more than a static TBox/ABox-description: the agent may need procedural reformulation, iterative stabilization, and validated import of externally retrieved material. In this respect the examples are also close in spirit to Chang's account of browsing as a multidimensional activity rather than a purely query-driven terminal act \cite{ChangRice1993,RiceMcCreadieChang2001}.

\begin{example}[Information-seeking episode as a PBox program]
Let $U$ be a context representing the current information need of an agent. Assume that the localized ABox contains assertions of the form
\[
q:\mathsf{Query}@U,
\qquad
q:\mathsf{Underspecified}@U,
\qquad
s:\mathsf{SearchSession}@U.
\]
Consider the guard atom
\[
\gamma_{\mathrm{ref}} := q:\mathsf{Underspecified}@U
\]
viewed as an element of the designated basic guard set $G_U$, and the program
\[
P_{\mathrm{seek}} :=
\mathbf{if}\;\gamma_{\mathrm{ref}}\;\mathbf{then}\;
   \mathbf{add}\;q:\mathsf{NeedsRefinement}@U
\mathbf{else}\;
   \mathbf{add}\;q:\mathsf{ReadyForRetrieval}@U.
\]
Its denotation distinguishes two information-behavioral situations. If the guard judgment derives
\[
(\State_U,\PBox_U)\Downarrow_g \gamma_{\mathrm{ref}} : \mathbf t,
\]
then the state is updated so as to record the need for reformulation; if it derives $\mathbf f$, then the state is updated so as to permit retrieval. In this way, the PBox does not merely describe a search action externally, but internalizes the agent's procedural decision rule inside the localized TAPO-state.
\end{example}

\begin{example}[Iterative stabilization of a search process]
Still over a context $U$, suppose that the ABox may contain assertions
\[
q:\mathsf{NeedsRefinement}@U,
\qquad
q:\mathsf{ReadyForRetrieval}@U,
\qquad
q:\mathsf{StableResultSet}@U.
\]
Let
\[
P_{\mathrm{revise}} :=
\mathbf{del}\;q:\mathsf{NeedsRefinement}@U;
\mathbf{add}\;q:\mathsf{ReadyForRetrieval}@U
\]
and consider the loop
\[
\mathbf{while}\;\neg(q:\mathsf{StableResultSet}@U)\;\mathbf{do}\;P_{\mathrm{revise}}.
\]
Here the negated guard is evaluated by the guard judgment of \cref{sec:guards}. As long as
\[
(\State_U,\PBox_U)\Downarrow_g \neg(q:\mathsf{StableResultSet}@U) : \mathbf t,
\]
the revision step is executed and the loop continues; once the same judgment yields $\mathbf f$, the loop terminates via $(\mathrm{While\text{-}F})$. This gives a simple model of iterative information behavior: the agent repeatedly revises the search state until a stable result set is reached. The loop may remain partial, reflecting the familiar fact that information seeking need not converge in finitely many steps.
\end{example}

\begin{example}[Validated external retrieval in information behavior]
Let $U$ encode an active literature-search context. Suppose that the current localized state contains
\[
q:\mathsf{ReadyForRetrieval}@U,
\qquad
t:\mathsf{Topic}@U,
\qquad
r:\mathsf{ResultCandidate}@U.
\]
An admissible query in $Q_U$ asks an external bibliographic service for references relevant to $t$. A response $\rho\in R_U$ may contain candidate assertions such as
\[
r:\mathsf{RelevantTo}(t)@U,
\qquad
r:\mathsf{PeerReviewed}@U,
\qquad
r:\mathsf{RecentSource}@U.
\]
The strengthened OBox treats these assertions as importable only after validation. For example, the validation policy may require both a sufficient trust level and a certificate that the source metadata is complete. Only then are the imported assertions adjoined to the ABox. Thus the OBox models a basic informational distinction between \emph{retrieved} material and \emph{accepted} material.
\end{example}

\begin{example}[Trust-sensitive oracle use for credibility assessment]
Let $U$ represent a fact-checking context in which an agent is evaluating a claim $c$. Assume that responses may carry trust labels such as
\[
\mathsf{low} \prec \mathsf{medium} \prec \mathsf{high}
\]
and certificates recording provenance, timestamp, or source agreement. A validation policy can then require, for instance, that a response supporting $c$ be accepted only if its trust level is at least $\mathsf{medium}$ and at least one provenance certificate is present. In that case, the OBox does not merely import information from outside; it records a mathematically explicit policy by which the agent discriminates between usable and non-usable external information. This is precisely the kind of informational evaluation that motivated the passage from TBox/ABox-structured to TAPO-structured description logic.
\end{example}

\subsection{Review-sensitive ordering behavior in a curry restaurant}
The next examples show the same architecture in a more everyday setting. Menu labels produce impressions, those impressions trigger procedural branching, and online reviews function as oracle-type external input whose import is governed by trust and validation.

The interaction between the procedural and oracle layers becomes clearer when one considers an everyday information-behavioral situation in which menu perception, hesitation, review consultation, and final action are intertwined. See \cref{fig:curry-flow} for a schematic overview of this passage from menu data to final ordering behavior.

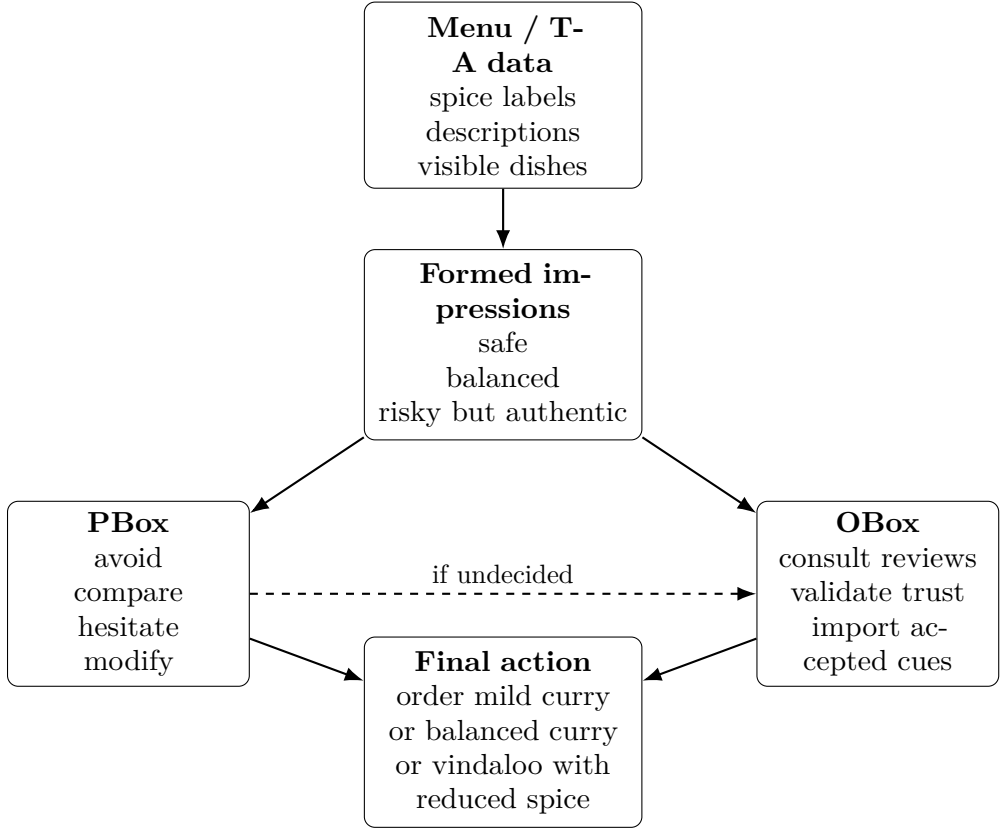
\begin{figure}[H]
\centering
\begin{tikzpicture}[
node distance=8mm and 9mm,
box/.style={draw, rounded corners, align=center, inner sep=5pt, text width=3.3cm},
smallbox/.style={draw, rounded corners, align=center, inner sep=4pt, text width=2.9cm},
arrow/.style={-{Latex[length=2.5mm]}, thick}
]
\node[box] (menu) {\textbf{Menu / T-A data}\\
spice labels\\
descriptions\\
visible dishes};

\node[box, below=of menu] (impr) {\textbf{Formed impressions}\\
safe\\
balanced\\
risky but authentic};

\node[smallbox, below left=of impr, xshift=-6mm] (pbox) {\textbf{PBox}\\
avoid\\compare\\hesitate\\modify};

\node[smallbox, below right=of impr, xshift=6mm] (obox) {\textbf{OBox}\\consult reviews\\validate trust\\import accepted cues};

\node[box, below=26mm of impr] (final) {\textbf{Final action}\\
order mild curry\\or balanced curry\\or vindaloo with reduced spice};

\draw[arrow] (menu) -- (impr);
\draw[arrow] (impr) -- (pbox);
\draw[arrow] (impr) -- (obox);
\draw[arrow] (pbox) -- (final);
\draw[arrow] (obox) -- (final);
\draw[arrow, dashed] (pbox.east) -- node[above, sloped, font=\small] {if undecided} (obox.west);
\end{tikzpicture}
\caption{Information-behavior flow in the curry restaurant example. Menu data first generate localized impressions; these are then processed procedurally in the PBox, possibly supplemented by validated review input through the OBox, before a final ordering action is reached.}
\label{fig:curry-flow}
\end{figure}

\begin{example}[Curry menu perception, review consultation, and ordering behavior]\label{ex:curry}
Let $U$ be a dining context in which two customers $u$ and $v$ face the same curry menu. Suppose the localized ABox contains assertions of the form
\[
u:\mathsf{Customer}@U,\qquad
v:\mathsf{Customer}@U,
\]
\[
c_1:\mathsf{Curry}@U,\qquad
c_2:\mathsf{Curry}@U,\qquad
c_3:\mathsf{Curry}@U,
\]
together with
\[
c_1:\mathsf{LowSpice}@U,\qquad
c_2:\mathsf{MediumSpice}@U,\qquad
c_3:\mathsf{HighSpice}@U.
\]
Assume moreover that the menu descriptions induce the assertions
\[
\begin{aligned}
&c_1:\mathsf{GentleMenuImpression}@U,\qquad
c_2:\mathsf{BalancedMenuImpression}@U,\\
&c_3:\mathsf{AuthenticMenuImpression}@U.
\end{aligned}
\]
and that the high-spice marking of $c_3$ also yields
\[
c_3:\mathsf{RiskyMenuImpression}@U.
\]
Thus the same dish $c_3$ is perceived under two competing aspects: it appears attractive as authentic, but risky as very hot. This tension is precisely the kind of localized informational state that cannot be expressed by the TBox/ABox core alone.
\end{example}

\begin{example}[PBox and OBox interaction in review-sensitive ordering]\label{ex:curry2}
Continue \cref{ex:curry}. Let the procedural layer contain a basic guard atom
\[
\gamma_{\mathrm{risk}}:=c_3:\mathsf{RiskyMenuImpression}@U
\]
and a composite guard
\[
\gamma_{\mathrm{undec}}:=u:\mathsf{UndecidedCustomer}@U
\vee
v:\mathsf{UndecidedCustomer}@U.
\]
For the customer $u$, who does not consult external reviews, consider the program
\[
\begin{aligned}
P_u :=\;&
\mathbf{if}\;\gamma_{\mathrm{risk}}\;\mathbf{then}\;
   \mathbf{add}\;c_3:\mathsf{AvoidCandidate}@U\\
&\mathbf{else}\;
   \mathbf{add}\;c_3:\mathsf{PreferredCandidate}@U;\
   \mathbf{add}\;c_1:\mathsf{SafeCandidate}@U.
\end{aligned}
\]
Its denotation models a cautious ordering strategy: if the guard judgment derives $(\State_U,\PBox_U)\Downarrow_g \gamma_{\mathrm{risk}}:\mathbf t$, then the strong-spice dish $c_3$ is excluded from the candidate set; otherwise the mild dish $c_1$ is promoted as the safe fallback.

For the customer $v$, by contrast, the procedural layer contains an intermediate step
\[
\begin{aligned}
P_v :=\;&
\mathbf{if}\;(\gamma_{\mathrm{risk}}\wedge \gamma_{\mathrm{undec}})\;\mathbf{then}\;
   \mathbf{add}\;v:\mathsf{ReviewConsultationNeeded}@U\\
&\mathbf{else}\;
   \mathbf{add}\;c_2:\mathsf{BalancedCandidate}@U.
\end{aligned}
\]
This means that the risk-impression attached to $c_3$ does not yet force rejection. Instead, when the guard judgment derives $(\State_U,\PBox_U)\Downarrow_g (\gamma_{\mathrm{risk}}\wedge \gamma_{\mathrm{undec}}):\mathbf t$, it triggers an information-seeking action that activates the oracle layer.

Now let $q_1\in Q_U$ ask whether the local meaning of ``spice level 3'' is milder than expected, and let $q_2\in Q_U$ ask whether the dish $c_3$ can be made less spicy on request. Suppose the oracle returns responses $r_1,r_2\in R_U$ such that
\[
\operatorname{imp}_U(r_1)=\{c_2:\mathsf{MilderThanExpected}@U\},
\]
\[
\operatorname{imp}_U(r_2)=\{c_3:\mathsf{AdjustableOnRequest}@U\}.
\]
Assume further that $r_1$ and $r_2$ both satisfy the relevant trust threshold and validation policy, so that they are accepted by the strengthened OBox and imported into the localized ABox.

After this validated oracle import, the procedural layer may continue with
\[
\begin{aligned}
P_v' :=\;&
\mathbf{if}\;c_3:\mathsf{AdjustableOnRequest}@U\;\mathbf{then}\;
   \mathbf{add}\;c_3:\mathsf{ControllableCandidate}@U\\
&\mathbf{else}\;
   \mathbf{add}\;c_2:\mathsf{BalancedCandidate}@U.
\end{aligned}
\]
Again the branching is governed by the guard judgment associated with the updated procedural configuration. Thus the ordering behavior of $v$ differs essentially from that of $u$. Customer $u$, who relies only on menu perception, may end with the mild order
\[
u:\mathsf{Orders}(c_1)@U,
\]
whereas customer $v$, after trusted review consultation, may reach either
\[
v:\mathsf{Orders}(c_2)@U
\]
as a balanced choice, or even
\[
v:\mathsf{Orders}(c_3)@U,\qquad
v:\mathsf{ReducedSpiceRequest}(c_3)@U.
\]
This example illustrates the informational reason for passing from TBox/ABox-structured to TAPO-structured description logic. The TBox/ABox layers record the menu and its explicit labels; the PBox records hesitation, candidate selection, and procedural branching; and the OBox records the import of external review information under a trust-sensitive validation policy. Online reviews therefore function here as a genuine oracle-like source: they do not merely add extra assertions, but alter how menu information is interpreted and how the final ordering action is procedurally determined.
\end{example}

\begin{remark}
The curry example is intentionally elementary, but it already displays the information-behavioral content of TAPO-description logic. A single menu label can produce competing impressions; those impressions can trigger procedural hesitation rather than immediate action; and external review material can then enter the state only after validation. In this way, the final behavior is governed not by static data alone, but by the interaction of description, procedure, and oracle-sensitive evaluation.
\end{remark}

\begin{remark}
Taken together, the search examples and the curry examples show why the passage from TBox/ABox-structured to TAPO-structured description logic is not merely decorative. The TBox/ABox core records explicit descriptions; the PBox records hesitation, branching, revision, and candidate management; and the OBox records the admission of external material under explicit validation conditions. This is exactly the level at which one can begin to analyze and evaluate information behavior rather than merely describe information objects.
\end{remark}

\section{Further directions}\label{sec:future}
This paper is intentionally skeletal. Its aim is to prepare a mathematically usable categorical language for TAPO-description logic while keeping clear distance from a mere literature review.

The next steps seem to include at least the following:
\begin{enumerate}
    \item a precise base category or topos in which concepts, individuals, programs, and strengthened oracle frames all live internally;
    \item a fuller proof theory, including completeness questions, normalization issues, and a more systematic comparison with the original TAPO-DL framework of \cite{InoueTAPO};
    \item a richer internal treatment of trust, certification, provenance, and validation in the oracle layer;
    \item richer families of examples showing how PBox and OBox data interact in concrete information-behavior scenarios, especially when procedural updates depend on validated external input and trust-sensitive review material.
\end{enumerate}

The present paper should therefore be read as a categorical first step beyond the original logical formulation \cite{InoueTAPO}, and as a bridge from ordinary categorical semantics of description logic to a broader TAPO-structured theory of information behavior.

\appendix

\section{TAPO-Description Logic as a Formal Analytic Layer for Browsing Theory}\label{app:chang-browsing}

The purpose of this appendix is to indicate that the present framework
is not only a formal system for information behavior in the abstract,
but may also serve as an analytic layer for established theories of browsing.
In particular, multidimensional approaches to browsing,
such as those associated with Chang and collaborators
\cite{ChangRice1993,RiceMcCreadieChang2001},
suggest that browsing should not be reduced to a weak or incomplete form
of direct searching.
Rather, browsing is shaped by context, motivation, cognitive stance,
available resources, and interface structure.
From the present viewpoint,
these aspects can be reorganized in a TAPO-structured manner
and thereby made accessible to formal comparison and rule-based analysis.

\subsection{Why Chang-style browsing is relevant here}

A central lesson of the browsing literature is that information behavior
often unfolds without a fully fixed query or a fully specified target state.
Users may begin with only a vague interest,
a tentative orientation,
or a partially formed need,
and the process may involve repeated inspection,
comparison, hesitation, redirection, and opportunistic selection.
Moreover, browsing behavior is influenced not only by the informational objects
themselves,
but also by the manner in which they are organized and displayed,
the interface affordances available to the user,
and the external evaluative cues encountered during the process
\cite{ChangRice1993,ChangAmazon2001}.

This perspective is highly compatible with the present development.
If one works only with a TBox/ABox description-logic layer,
one may represent concepts, instances, and static relational facts,
but it becomes difficult to express the procedural unfolding of browsing
or the selective use of external evaluative inputs.
The introduction of the PBox and OBox is therefore not merely
an optional extension:
it is precisely what allows one to treat browsing
as a dynamic and context-sensitive information behavior.

\subsection{A TAPO reinterpretation of multidimensional browsing}

Let $U$ be a local informational context,
for example a current interface state,
a shelf region,
a webpage category,
or a visible result cluster.
A TAPO-structured browsing state over $U$ is written
\[
X_U=(\State_U,\PBox_U,\OBox_U),\qquad \State_U=(\TBox,\ABox_U).
\]
Figure~\ref{fig:chang-tapo-browsing} summarizes the basic TAPO reinterpretation
of a Chang-style browsing episode.

\begin{figure}[H]
\centering
\begin{tikzpicture}[
  box/.style={draw, rounded corners, align=center, minimum height=9mm, inner sep=4pt, font=\small},
  line/.style={-Latex, thick}
]
\node[box, text width=9.6cm] (context) at (0,0) {\textbf{Local browsing context $U$}\\
interface state, shelf region, webpage category,\\
visible result cluster};

\node[box, text width=4.8cm] (ta) at (-3.3,-2.6) {\textbf{TBox/ABox level}\\
static organizational structure\\
and local instantiated facts\\
visible items, categories, links,\\
hesitation, candidate status};

\node[box, text width=4.0cm] (obox) at (3.3,-2.6) {\textbf{OBox}\\
reviews, rankings,\\
recommendations,\\
trust assignment,\\
certification,\\
validation policy};

\node[box, text width=9.6cm] (pbox) at (0,-6.0) {\textbf{PBox}\\
inspect, compare, follow link, save candidate, return, refine interest\\
procedural browsing behavior};

\node[box, text width=9.6cm] (outcome) at (0,-8.5) {\textbf{Browsing outcome}\\
stabilized candidate set, selected item, postponed decision, redirected exploration};

\draw[line] ([xshift=-16mm]context.south) -- (ta.north);
\draw[line] ([xshift=16mm]context.south) -- (obox.north);
\draw[line] (ta.south) -- ([xshift=-20mm]pbox.north);
\draw[line] (obox.south) -- ([xshift=20mm]pbox.north);
\draw[line] (pbox.south) -- (outcome.north);
\end{tikzpicture}
\caption{A TAPO reinterpretation of multidimensional browsing in the sense of Chang and collaborators. Static organizational structure is represented at the TBox/ABox level, browsing actions at the PBox level, and selectively incorporated external cues at the OBox level.}
\label{fig:chang-tapo-browsing}
\end{figure}
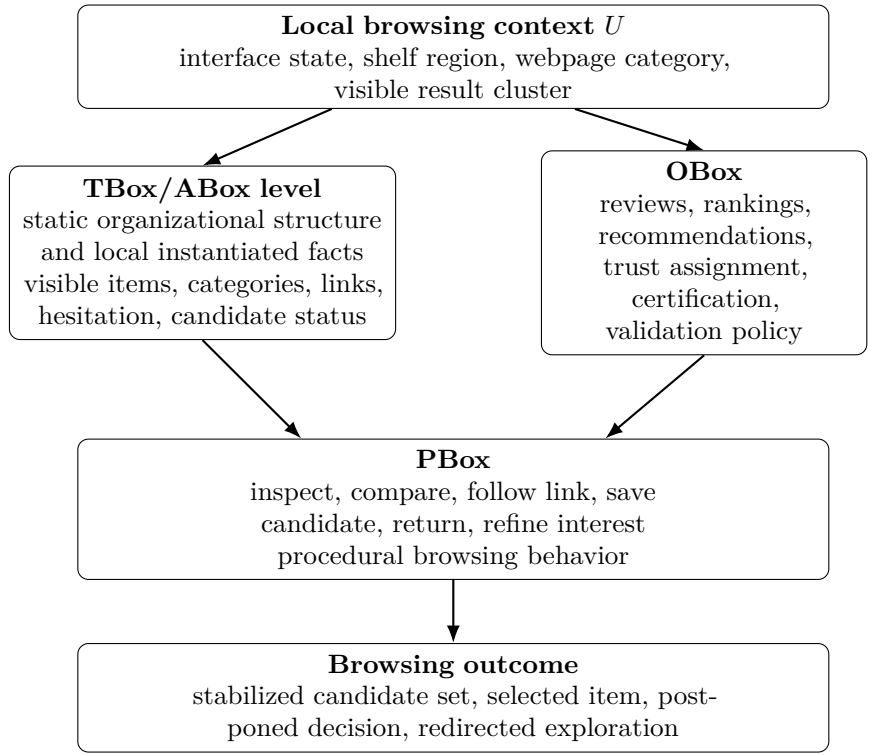

We now reinterpret central ingredients of browsing theory
within this structure.

\paragraph{TBox layer.}
The TBox may contain concepts such as
\[
\begin{aligned}
&\mathsf{Document},\ \mathsf{Category},\ \mathsf{VisibleItem},\ \mathsf{ProminentItem},\\
&\mathsf{CandidateItem},\ \mathsf{RelevantItem},\ \mathsf{TrustedCue},\ \mathsf{ReviewSignal},\\
&\mathsf{InterfaceElement}
\end{aligned}
\]
as well as roles such as
\[
\begin{aligned}
&\mathsf{displayedIn}(x,U),\ \mathsf{belongsTo}(x,C),\ \mathsf{linkedTo}(x,y),\\
&\mathsf{recommendedWith}(x,y),\ \mathsf{hasCue}(x,\kappa)
\end{aligned}
\]
At this level,
one captures the relatively stable classificatory and relational structure
within which browsing takes place.

\paragraph{ABox layer.}
The ABox records the current local informational facts,
for example
\[
d_1:\mathsf{VisibleItem}@U,\qquad
d_2:\mathsf{ProminentItem}@U,\qquad
(d_1,d_2):\mathsf{linkedTo}@U,
\]
or
\[
u:\mathsf{User}@U,\qquad
(u,d_1):\mathsf{inspects}@U,\qquad
(u,d_2):\mathsf{hesitatesOver}@U.
\]
Thus the ABox gives the presently instantiated browsing situation.

\paragraph{PBox layer.}
The PBox represents the procedural side of browsing.
This is crucial.
Browsing is typically not a single inference step from query to answer,
but a process involving actions such as:
\[
\mathsf{inspect},\
\mathsf{compare},\
\mathsf{scroll},\
\mathsf{followLink},\
\mathsf{openPreview},\
\mathsf{saveCandidate},\
\mathsf{return},\
\mathsf{refineInterest}.
\]
Accordingly, one may introduce guarded procedures such as
\[
\mathbf{if}\ \mathsf{prominent}(d)\ \mathbf{then}\ \mathsf{inspect}(d),
\]
\[
\mathbf{if}\ \mathsf{uncertain}(u)\ \mathbf{then}\ \mathsf{compare}(d_1,d_2),
\]
\[
\mathbf{if}\ \mathsf{candidate}(d)\ \mathbf{then}\ \mathsf{save}(d),
\]
\[
\mathbf{while}\ \mathsf{not\_settled}(u)\ \mathbf{do}\ \mathsf{browse\_next}.
\]
Here the symbols $\mathsf{prominent}(d)$, $\mathsf{uncertain}(u)$, $\mathsf{candidate}(d)$, and $\mathsf{not\_settled}(u)$ should be read as guard expressions evaluated by the metalevel guard judgment of \cref{sec:guards}. This layer is what makes it possible
to model browsing as an unfolding course of action
rather than as a single declarative fact.

\paragraph{OBox layer.}
The OBox represents the selective incorporation of external inputs.
In a browsing setting,
such external inputs include:
reviews, recommendations, popularity signals,
other users' annotations, staff suggestions,
or system-generated rankings.
These are not merely additional assertions.
They are externally sourced signals,
often heterogeneous in reliability,
and therefore naturally fall under the refined OBox:
query, response, trust assignment,
certification data, validation policy, and validated import.

In this perspective,
a browsing system may issue a query
\[
q=\text{``Is item $d$ widely regarded as useful for topic $T$?''}
\]
and receive one or more responses $\rho$.
Only after trust-sensitive validation
does the response become importable into the current knowledge state.
Thus one may obtain an update of the form
\[
(\State_U,\OBox_U)\vdash q \xRightarrow{V} \State_U',
\]
where the imported information is not raw testimony,
but already filtered through a validation policy.

\subsection{Formal decomposition of browsing dimensions}

Chang-style browsing theory is valuable partly because it does not treat browsing
as one-dimensional.
The present framework suggests the following analytic decomposition.

\begin{enumerate}
\item \emph{Contextual dimension.}
This corresponds to the local site $U$ and to the contextual assertions
carried by $\ABox_U$.
Examples include the current page,
the shelf segment under inspection,
or the visible cluster of results.

\item \emph{Behavioral dimension.}
This is captured by the PBox,
whose rules describe movement, comparison, postponement,
revisiting, and selection.

\item \emph{Motivational and cognitive dimension.}
This may be formalized by assertions such as
\[
u:\mathsf{Undecided}@U,\qquad
u:\mathsf{Curious}@U,\qquad
u:\mathsf{RiskAverse}@U,
\]
which then function as guards in the PBox.

\item \emph{Resource dimension.}
Available cues, recommendations, reviews,
or interface affordances may be represented
either as TBox/ABox-level objects or,
when externally supplied and selectively incorporated,
through the OBox.

\item \emph{Consequential dimension.}
The end result of a browsing episode
need not be a single retrieved answer.
It may instead be a transformed candidate set,
a revised preference ordering,
a narrowed region of attention,
or a decision to postpone commitment.
These are naturally represented
as resulting TAPO-states after procedural and oracle-mediated updates.
\end{enumerate}

The point of this decomposition is not terminological.
It shows that browsing theory can be made formally articulate:
different dimensions correspond to different logical layers,
and their interaction may be studied explicitly.

\subsection{A generic browsing episode in TAPO form}

We now sketch a generic browsing episode.
Suppose the user begins in a state
\[
X_U=(\State_U,\PBox_U,\OBox_U),
\]
where the ABox includes visible items
\[
d_1:\mathsf{VisibleItem}@U,\qquad d_2:\mathsf{VisibleItem}@U,
\]
and a hesitation assertion
\[
u:\mathsf{Undecided}@U.
\]

At the PBox level one may have a rule
\[
\mathbf{if}\ \mathsf{Undecided}(u)\ \mathbf{then}\ \mathsf{compare}(d_1,d_2).
\]
Suppose the corresponding guard judgment evaluates $\mathsf{Undecided}(u)$ as true. If the comparison produces no stable preference, so that the guard judgment also evaluates $\mathsf{still\_undecided}(u)$ as true, then the following procedural rule is activated:
\[
\mathbf{if}\ \mathsf{still\_undecided}(u)\ \mathbf{then}\ \mathsf{consult}(q_{d_1,d_2}).
\]
This is precisely where the OBox becomes relevant.
A query is issued, for example:
\[
q_{d_1,d_2}=\text{``Which of $d_1,d_2$ is more useful for topic $T$?''}
\]
The system or environment returns responses
\[
\rho_1,\rho_2,\dots
\]
with associated trust values and possibly certification data.
A validation policy $V$ selects which responses may be imported.
If the validated import yields
\[
d_1:\mathsf{RecommendedItem}@U,
\]
then a further PBox rule may update the candidate structure:
\[
\mathbf{if}\ d_1:\mathsf{RecommendedItem}@U
\ \mathbf{then}\ \mathsf{saveCandidate}(d_1).
\]
Thus the browsing process is not modeled
as an unstructured sequence of impressions,
but as a controlled alternation between
internal procedural moves and validated external enrichment.

\subsection{Why this goes beyond a descriptive restatement}

One may ask whether this appendix merely restates browsing theory
in formal notation.
The answer is no.
The gain of the TAPO perspective is at least fourfold.

\paragraph{First,} it distinguishes clearly between
static classificatory structure and procedural behavior.
Many descriptions of browsing intermix these.
In TAPO-description logic,
the distinction between TBox/ABox and P is explicit.

\paragraph{Second,} it separates internal browsing behavior
from externally supplied evaluative cues.
This is the contribution of the refined OBox.
Reviews, rankings, and recommendations do not simply become facts;
they become validated imports.

\paragraph{Third,} it makes comparison possible.
Different interface environments,
different validation policies,
or different user dispositions
can be represented by different TAPO-configurations,
and their consequences can then be compared.

\paragraph{Fourth,} it opens the way to proof-theoretic
and semantic analysis.
Once browsing is represented by judgments,
procedural transitions,
and oracle-sensitive imports,
one may ask not only \emph{what} happens,
but under which rules it follows,
whether certain outcomes are derivable,
and how those derivations correspond to the categorical semantics.

\subsection{Relation to interface-sensitive browsing}

This formalization is especially suggestive
for interface-sensitive studies of browsing.
If item prominence, grouping, labeling,
or recommendation widgets affect user behavior,
then such influence may be represented at two levels.
At the TBox/ABox level,
the interface provides visible organizational structure.
At the PBox level,
this structure induces or biases procedures of inspection and navigation.
At the OBox level,
externally generated ratings, reviews, or recommendation signals
enter the process selectively.
Hence the present framework may be viewed
as a mathematical apparatus
for analyzing how organization, interface, and external cues
jointly shape browsing behavior.

\subsection{Toward a formal research program}

The foregoing observations suggest that TAPO-description logic
may support an independent research program
on the formal analysis of browsing theory.
At least the following questions appear natural:

\begin{enumerate}
\item Which classes of browsing episode can be represented
by finite PBox procedures?

\item Under what conditions do repeated browsing and external consultation
stabilize to a fixed candidate set?

\item How does one compare two interface environments
in terms of the procedural and oracle-sensitive paths
they induce?

\item Can one define formal notions of browsing success,
serendipitous discovery,
or premature abandonment
within TAPO-structured knowledge dynamics?

\item To what extent can local browsing episodes
be glued into a larger global account
of information behavior via sheaf-theoretic methods?
\end{enumerate}

These questions go beyond the immediate scope of the present paper,
but they indicate that the framework developed here
is not merely compatible with browsing theory:
it can also serve as a basis
for a more formal and comparative theory of browsing itself.

\begin{remark}
From this perspective,
the present paper may be read in two complementary ways.
On the one hand,
it develops TAPO-description logic
as a logic for information behavior.
On the other hand,
it suggests that established theories of browsing,
especially multidimensional and interface-sensitive ones,
may be reinterpreted within TAPO-description logic
as formally analyzable systems of local knowledge states,
procedural transitions, and validated external inputs.
\end{remark}

\end{document}